\DeclareMathOperator*{\minimize}{minimize}
\DeclareMathOperator*{\maximize}{maximize}
\newtheorem{lemma}{Lemma}
\newcommand{\multiline}[1]{%
	\begin{tabularx}{\dimexpr\linewidth-\ALG@thistlm}[t]{@{}X@{}}
		#1
	\end{tabularx}
}
\begin{document}

\title{Wirelessly Powered Federated Learning Networks: \\Joint Power Transfer, Data Sensing, Model Training, and Resource Allocation}
\author{Mai~Le, Dinh~Thai~Hoang,~\IEEEmembership{Senior Member,~IEEE},
Diep~N.~Nguyen,~\IEEEmembership{Senior Member,~IEEE}, Won-Joo~Hwang,~\IEEEmembership{Senior Member,~IEEE}, and Quoc-Viet~Pham,~\IEEEmembership{Member,~IEEE}
\vspace{-0.5cm}
\thanks{Mai Le is with the Department of Information Convergence Engineering, Pusan National University, Busan 46241, Korea, and also with the School of Computer Science and Statistics, Trinity College Dublin, The University of Dublin, D02 PN40, Ireland (e-mail: maile2108@gmail.com).}
\thanks{Dinh~Thai~Hoang, and Diep~N.~Nguyen are with the School of Electrical and Data Engineering, University of Technology Sydney, Sydney, NSW 2007, Australia (e-mail: \{hoang.dinh, diep.nguyen\}@uts.edu.au).}
\thanks{Won-Joo Hwang (corresponding author) is with the Department of Biomedical Convergence Engineering, Pusan National University, Yangsan 50612, Korea (e-mail: wjhwang@pusan.ac.kr).}
\thanks{Quoc-Viet Pham is with the School of Computer Science and Statistics, Trinity College Dublin, The University of Dublin, D02 PN40, Ireland (e-mail: viet.pham@tcd.ie).}
\thanks{The first and last authors contribute equally to this work.}
}

\maketitle

\begin{abstract}
Federated learning (FL) has found many successes in wireless communications; however, the implementation of FL has been hindered by the energy limitation of mobile devices (MDs) and the availability of training data at MDs. Wireless power transfer (WPT) and mobile crowdsensing (MCS) are promising technologies that can be leveraged to power energy-limited MDs and acquire data for learning tasks. However, how to integrate WPT and MCS towards sustainable FL solutions is a research topic entirely missing from the open literature. This work for the first time investigates a resource allocation problem in collaborative sensing-assisted sustainable FL (S2FL) networks with the goal of minimizing the total completion time. In particular, we investigate a practical harvesting-sensing-training-transmitting protocol in which energy-limited mobile devices first harvest energy from RF signals, use it to gain a reward for user participation, sense the training data from the environment, train the local models at mobile devices, and transmit the model updates to the edge server. The total completion time minimization problem of jointly optimizing power transfer, transmit power allocation, data sensing, bandwidth allocation, local model training, and data transmission is complicated due to the non-convex objective function, highly non-convex constraints, and strongly coupled variables. 
We propose a computationally-efficient path-following algorithm to obtain the optimal solution via the decomposition technique. In particular, inner convex approximations are developed for the resource allocation subproblem, and the subproblems are performed alternatively in an iterative fashion. Simulation results are provided to evaluate the effectiveness of the proposed S2FL algorithm in reducing the completion time up to 21.45\% in comparison with other benchmark schemes. Further, we investigate an extension of our work from frequency division multiple access (FDMA) to non-orthogonal multiple access (NOMA) and show that NOMA can speed up the total completion time 8.36\% on average of the considered FL system. 
\end{abstract}

\begin{IEEEkeywords}
Artificial Intelligence, Federated Learning, Mobile Crowdsensing, Mobile Edge Computing, Resource Allocation, Wireless Power Transfer.
\end{IEEEkeywords}

\IEEEpeerreviewmaketitle
\section{Introduction}
\label{Sec:Introduction}
\IEEEPARstart{A}{rtificial} intelligence (AI) has found many successes in various engineering disciplines, of which future wireless communications are a notable area \cite{wang2023road, de20226g}. The increasing usage of AI techniques has been seen as a key to solving various challenging problems in 5G networks that are difficult to be handled by conventional optimization approaches, as well as meeting technical requirements in future 6G networks and Internet-of-Things (IoT). It has been shown that AI techniques, such as deep learning and reinforcement learning, can be used to solve various challenging problems in wireless communications that are hardly handled by conventional optimization techniques. However, conventional AI techniques require mobile devices (MDs) to share their data with a learning server for centralized storage and learning, which, however, results in significant issues of communication bottleneck, storage availability, and especially data privacy. Federated learning (FL) has been considered a promising AI technique to overcome these issues \cite{kairouz2021advances}. Unlike conventional AI techniques, FL aims to build a collaborative learning model by allowing MDs to share their model updates instead of raw data, thus enhancing data privacy \cite{qin2021federated}. As such, there is an increasing interest in making use of FL in wireless networks, such as vehicular networks \cite{xu2023secure}, cybersecurity \cite{alazab2022federated}, and wireless resource allocation \cite{yang2021energyFL}. 

However, vanilla FL frameworks are severely limited by the energy limitation of MDs (e.g., sensors and wearable devices) and the data available for local training. To overcome these hurdles, it is extremely promising to integrate FL with key-enabling technologies of future wireless networks like wireless power transfer (WPT) and simultaneous wireless information and power transfer (SWIPT) to enhance the energy capacities of energy-limited MDs and thus increase the FL performance \cite{wu2022swipt}. For example, \cite{pham2021uav, pham2022energy} proposed to dispatch an unmanned aerial vehicle (UAV) as the learning server in FL and as the aerial energy source in WPT. These studies show that aerial FL helps to improve the network performance compared with the benchmarks, especially when the placement of the learning server is predefined. WPT-aided FL was also investigated in recent works \cite{wu2022simultaneous, li2022dynamic, wu2022non}. In \cite{wu2022simultaneous}, A WPT and non-orthogonal multiple access (NOMA)-aided FL system was considered, where each learning round consists of four phases, including WPT, local training, non-orthogonal transmission, and model aggregation. Following that, a layered algorithm was proposed to optimize the transmit power of the edge server, time allocation, processing rates, and local training accuracy so as to minimize the energy-latency trade-off objective. Motivated by \cite{wu2022simultaneous}, the work in \cite{li2022dynamic} studied a long-term optimization problem of user scheduling, power allocation, and power splitting factor, which is then solved by an actor-critic reinforcement learning method. However, unlike WPT in \cite{pham2022energy, wu2022simultaneous}, SWIPT was adopted in \cite{li2022dynamic} to allow the edge server to broadcast the global model and wirelessly power MDs simultaneously. Different from \cite{li2022dynamic}, which fixes the time of the four phases, the work in \cite{wu2022non} proposed to optimize the time of the SWIPT phase and NOMA transmission phase. 
However, these prior studies were based on the assumption that the data is available for local training at MDs. With the growing sensing capabilities of MDs, it is apparent that MDs can sense data from the environment, which becomes available for training local models. Nevertheless, the sensing data for FL demands a novel design of properly allocating time for different learning phases, along with communication and computation resources.

Mobile crowdsensing (MCS) has been a key technology for the future Internet that allows MDs to perform data sensing, computing, and communication tasks in order to complete more complex tasks \cite{capponi2019survey}. With multi-access edge computing (MEC) deployment, computing resources are made available at the network edge, and the sensing data can be offloaded to a server for remote processing. Conventionally, the sensing data is collected by centralized AI algorithms to build intelligent MCS systems, thus raising the data privacy issue. In this regard, FL not only enables intelligent collaborative sensing systems but also helps to preserve the data privacy of MDs. Moreover, resource allocation optimization has shown a critical role in improving the performance of MCS systems \cite{zhang2022fedsky, zhao2022crowdfl, wang2021learning, gao2021federated, li2022joint, mao2022intelligent}. For example, the work in \cite{zhang2022fedsky} proposed a novel aggregation method in FL-enabled MCS systems to address the system's significant issues of appropriate selection of heterogeneous MDs and lack of privacy-preserving methods. 
In \cite{zhao2022crowdfl}, a hybrid incentive method based on a reverse Vickrey auction and a posted pricing mechanism was proposed to promote the participation of MDs as workers in MCS systems. In \cite{wang2021learning}, a blockchain-enabled FL-MCS network was investigated, in which blockchain and differential privacy are employed to decentralize and secure the transmission of the global model and local model updates. More recently, studies on joint data sensing, computing, and communication are investigated in \cite{li2022joint, mao2022intelligent}, which confirm the superiority of the joint framework over baselines with disjoint considerations. 
It is worth noting that prior studies \cite{capponi2019survey, zhang2022fedsky, zhao2022crowdfl, wang2021learning, gao2021federated, li2022joint, mao2022intelligent} mainly focus on leveraging FL to improve data privacy or edge computing resources to process heterogeneous sensing data in MCS and ignore the usage of collaborative sensing for FL via a joint power transfer, data sensing, model training, and communication framework. It is a relatively unexplored research question about how to leverage WPT and MCS for FL and how to optimize the resource allocation problem in such FL networks with the objective of minimizing the total completion time.

On the other hand, designing efficient transmission to share model updates from MDs to the edge server is a crucial issue in FL. In this regard, prior studies have adopted OMA, NOMA, or both \cite{yang2021energyFL, pham2022energy, liang2022data, chen2022irs, mao2022intelligent}. From the optimization perspective, NOMA consistently achieves a better sum rate than conventional OMA techniques \cite{chen2017optimization}; however, that observation may not hold for MEC and FL systems with different performance metrics. For example, the work in \cite{yang2021energyFL} considered optimizing the total energy consumption and the total completion time in FDMA-based FL networks and showed that FDMA is mostly superior to time-division multiple access (TDMA), which may outperform FDMA in terms of the computation energy consumption in case of the high power budget of MDs. The works in \cite{liang2022data, chen2022irs} investigated the optimization problem of computation offloading in MEC systems with two kinds of offloading schemes: NOMA and TDMA. These studies show that TDMA can be superior to NOMA under specific scenarios of identical sensing rates in \cite{liang2022data} and flexibly optimized beamforming design in \cite{chen2022irs}. The work in \cite{mao2022intelligent} studied the completion time minimization problem in reconfigurable intelligent surface (RIS) aided FL networks with two multiple access schemes, including frequency division multiple access (FDMA) and NOMA. Simulation results in \cite{mao2022intelligent} exhibit that the NOMA-FL system experiences a shorter completion time than the FDMA-FL system, while RIS further helps to reduce the completion time compared with ones without RIS. 

\begin{figure*}[t]
	\centering
	\includegraphics[width=0.7635\linewidth]{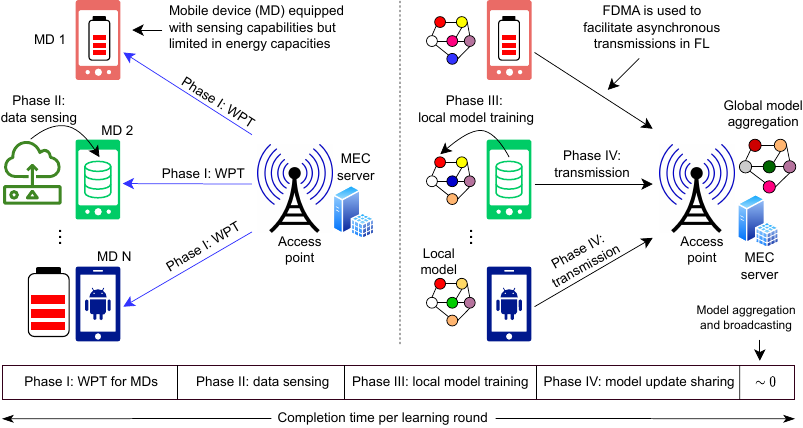}
	\caption{Illustration of the system model in a communication round that consists of $N$ MDs and one edge server collocated with the access point.} 
	\label{Fig:System_model}
\end{figure*}

As aforementioned, an efficient FL implementation is constrained by MDs' limited battery capacity and training data availability. WPT and MCS are two promising technologies to overcome these deficiencies so as to enable intelligent IoT services and applications in future wireless systems. However, to the best of our knowledge, the study of integrating WPT and MCS towards sustainable FL solutions is a research topic entirely missing from the open literature. \textit{Different from prior studies that assume the availability of training data, the sensing process is integrated into FL networks to exploit the growing sensing capabilities of MDs in this paper}. To this end, we aim to investigate a sensing-assisted sustainable FL (called S2FL) scenario with the implementation of the four-phase of ``harvesting-sensing-training-transmitting''. Specifically, in our investigated S2FL system, an operator-deployed edge server collocated with an access point employs WPT to wirelessly power MDs in the first phase. After receiving harvested energy as a reward, the MDs are incentivized to contribute to the learning process by using the rest of harvested energy to sense data, train local models, and share local model updates. Further, FDMA-based transmission and an extension to the non-orthogonal multiple access (NOMA) counterpart are considered for the communication between MDs and the edge server. Under this design, it is crucial to properly allocate the time for WPT, sensing, local training, and transmission because of a tight coupling among the learning phases. It is also important to take into account the processing rates of MDs (i.e., central processing unit (CPU) frequencies), their transmit power, power transfer, and bandwidth partitioning. Following the aforementioned considerations, the completion time minimization problem is formulated in this paper, and an efficient path-following algorithm is proposed to obtain the optimal solution. In summary, notable contributions and features offered by this work are summarized as follows:
\begin{itemize}
    \item \textbf{Consideration of S2FL}: Two main challenges of efficient FL implementation are the energy limitation and availability of training data at MDs. To settle these issues, we investigate a practical scenario in FL networks when the MDs first harvest energy from the energy source via RF signals, then leverage it to gain a reward, sense data from the environment, next train local models using the sensing data, and use residual energy to share local model updates with the learning server. Considering these aspects, the four-phase ``harvesting-sensing-training-transmitting'' protocol is proposed to design the S2FL framework.
    
    \item \textbf{Problem Formulation of S2FL Networks}: With the aim of minimizing the total completion time of the considered S2FL network, we formulate an optimization problem with a set of constraints on energy, data sensing, model training, communication, and computing resources. However, obtaining the globally optimal solution for the problem is computationally challenging because of its non-convexity and a tight coupling among optimization variables. As such, we devise a computationally-efficient path-following algorithm to effectively deal with the non-convexity issue and obtain the optimal solution. In particular, we propose to decompose the original problem into subproblems of resource allocation and local model accuracy and develop inner convex approximations to generate the convex programs of these two problems. The algorithm is then designed in the alternative fashion of updating the local model accuracy and solving the convex approximated resource allocation problem. 
    
    \item \textbf{OMA-enabled and NOMA-enabled in S2FL}: Inspired by the superiority of NOMA over OMA, we extend the S2FL system to employ NOMA as the multiple access scheme for data transmission from MDs to the edge server so as to reduce the total completion time further. Unlike the OMA case, MDs are allowed to use the entire bandwidth resource for communication with the edge server, and the bandwidth allocation is thus not needed to be considered, but the optimization problem is still highly non-convex. By leveraging the decomposition technique and inner convex approximations as in the OMA case, we then propose a computationally-efficient path-following algorithm for the S2FL problem in the NOMA case. 
    
    \item \textbf{Extensive Performance Evaluation}: Various simulation results are provided to show the effectiveness of our proposed S2FL framework. Moreover, the proposed algorithm is compared with a set of benchmark schemes under various network settings. It is shown that the proposed algorithm offers great performance improvements over the benchmark schemes. In particular, we provide sub-optimal schemes for the S2FL problem, including fixed training data, given model accuracy, proportional power transfer, and equal bandwidth allocation. Through a set of simulation results under various network settings, we show that the proposed S2FL can significantly minimize the total completion time compared with the benchmark schemes. We also show that the NOMA-enabled S2FL system exhibits a lower completion time than the FDMA-enabled S2FL counterpart.
\end{itemize}

The rest of this paper is organized as follows. In Section~\ref{Sec:SystemModel}, the system model and the optimization problem are presented. Then, we introduce our proposed path-following algorithm in Section~\ref{Sec:Solution}, and extend the algorithm to the NOMA case in Section~\ref{Sec:Extension_NOMA}. Next, various simulation results are presented in Section~\ref{Sec:Evaluation} to verify the effectiveness of the proposed algorithm. Finally, we conclude this work and highlight several interesting directions for future work in Section~\ref{Sec:Conclusion}. 

\section{Preliminaries and Problem Formulation}
\label{Sec:SystemModel}
In this section, we present the models of energy harvesting, data sensing, local computing, and communication and then formulate a non-convex optimization problem of jointly optimizing the power allocation, computing rate, bandwidth allocation, time allocation, and model accuracy. 

\subsection{Network Model}
\label{SubSec:SystemModel_NM}

As illustrated in Fig.~\ref{Fig:System_model}, we consider a collaborative sensing network with an edge computing server and $N$ mobile devices (MDs). The operator-deployed MEC server is collocated with the access point at the network edge, and the set of MDs/IoT devices is denoted as $\mathcal{N} = \{1, \dots, N\}$. 
Due to the limited battery capacity, MDs may not have sufficient energy to complete the FL process. To overcome this issue, radio frequency (RF)-based energy harvesting is a promising approach that allows MDs to harvest external energy from RF signals transmitted by the access point. Therefore, each MD can participate in and perform the FL process based on the energy converted from the received RF signals.
In addition, unlike vanilla FL, which assumes the availability of data for local model training at MDs, this work considers a more practical scenario, where MDs with sensing capabilities can sense and then collect the data from the surrounding environment before training their local models. For example, biosensors can be utilized to sense healthcare data, and cameras embedded in MDs can be used to collect image data.

Under the network setup above, we propose a new harvesting-sensing-training-transmit protocol. More specifically, the entire FL process is composed of four main phases with adjustable time durations, as illustrated in Fig.~\ref{Fig:System_model}, including energy harvesting, data sensing, local model training, and model update sharing. 
\begin{itemize}
    \item \textit{Energy harvesting}: At the beginning of each learning round, MDs receive radio frequency (RF) signals transmitted by the access point for a duration of $\tau^{h}$ and convert them into energy for further usage. Furthermore, the energy beamforming technique is employed to improve energy transfer efficiency. Therefore, the access point can simultaneously beam energy to the energy-limited MDs so that they can secure sufficient energy for sensing and learning purposes in the sequel.
    
    \item \textit{Data sensing}: After the WPT phase, MDs can leverage the harvested energy (as a reward) and a part to sense data from the environment for a period of $\tau^{s}$. By the end of this phase, each MD can collect a certain amount of data that will not be shared with the edge server but used to train its local model. 
    
    \item \textit{Local model training}: In the third phase, the sensing data is used to train local models for $\tau^{l}$. Note that the sensing data is not shared with the edge server so that privacy can be enhanced compared with centralized AI techniques. 
    
    \item \textit{Transmission of model updates}: After completing the local training phase, MDs share their local model updates with the edge server to create a new global model. This phase lasts for a period of $\tau^{t}$.
\end{itemize}
It is noteworthy that there are other phases of model aggregation and model broadcasting; however, these phases can be ignored due to the powerful computing resources and large power budget of the MEC server, i.e., the period of these two phases can be approximated as zero, or the phases can be performed by leveraging the energy harvesting phase. Moreover, for the sake of the joint power transfer, data sensing, model training, and transmission framework proposed in this work, we consider a synchronous design with a certain degree of synchronization among MDs. An asynchronous design may allow MDs to have more flexibility in performing the four phases in S2FL, and this design is left for future work. 

\begin{figure}[t]
	\centering
	\includegraphics[width=0.925\linewidth]{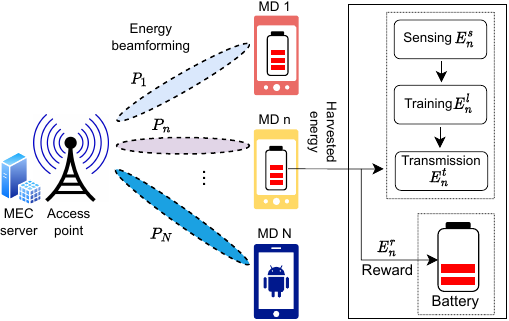}
	\caption{Our considered wirelessly powered S2FL system.}
	\label{Fig:WPT_reward}
\end{figure}

\subsection{Energy Harvesting}
Due to the battery limitation, MDs need more energy to complete the learning process, and RF energy harvesting is a promising solution. During the first phase of the harvesting-sensing-training-transmit protocol, MDs harvest energy from RF signals emitted by the access point. Specifically, the access point simultaneously transfers RF energy to MDs by employing the energy beamforming technique to create $N$ radio beams (see Fig.~\ref{Fig:WPT_reward}). Denote by $P_{n}$ the transmit power of the beam pointing at MD $n$, and by $P_{0}$ the fixed power budget of the access point. Thus, the following power constraint is imposed
\begin{equation}
    \sum\nolimits_{n=1}^{N}P_{n} \leq P_{0}.
\end{equation}
Without loss of generality, we consider a linear energy harvesting model, a widely adopted model for WPT in the open literature \cite{pham2021uav, pham2022energy, li2019wirelessly}. Thus, the amount of harvested energy of MD $n$ can be calculated as follows\footnote{The harvested energy can be characterized by a more general nonlinear model. In such a case, the harvested energy increases when the input power increases and then becomes saturated when the input power is sufficiently large. Based on innerly convex approximations developed in \cite{pham2022energy}, solving the S2FL problem considered in this work with a nonlinear energy harvesting model is relatively  straightforward. It is also promising to extend our current work to consider multi-source WPT to improve the FL performance.}:
\begin{equation}
    E_{n}^{h} = \varphi\tau^{h} P_{n} h_{n},
\end{equation}
where $\varphi$ ($0 < \varphi < 1$) denotes the energy conversion efficiency coefficient and $h_{n}$ denotes the channel gain between the MEC server and MD $n$. Then, MD $n$ stores a part of $E_{n}^{h}$ in its battery as the reward and uses the rest to perform data sensing, local training, and transmission. Denote by $E_{n}^{r}$, $E_{n}^{s}$, $E_{n}^{l}$, $E_{n}^{t}$ the respective energy consumption of the reward, sensing, local training, and transmission phases. Based on the discussion above, the following constraint should hold for each MD in each learning round
\begin{equation}
    E_{n}^{r} + E_{n}^{s} + E_{n}^{l} + E_{n}^{t} \leq E_{n}^{h}.
\end{equation}
This constraint ensures that each MD $n$ cannot use more than its harvested energy for further usage in reward, data sensing, local training, and model update sharing. Here, the energy reward is supposed to be proportional to the amount of data $d_{n}$ to be sensed, i.e., $E_{n}^{r} = d_{n}q_{n}^{r}$, where $q_{n}^{r}$ is a reward coefficient. 

\subsection{Data Sensing}
In the second phase, i.e., after harnessing enough energy, MDs continue learning by sensing the environment's data. Denote by $r_{n}$ the sensing rate of MD $n$, the sensing data (in bits) can be given as $d_{n} = r_{n} \tau^{s}$\footnote{Since the data sensed by MDs may not reflect the global data distribution well, it is worthwhile to investigate a sampling-driven control for FL that considers the generalization gap. This is a promising research direction that we are pursuing in another work.}. Let $E_{n}^{s}$ be the energy consumed for data sensing at MD $n$, which is proportional to the size of the sensing data. Thus, $E_{n}^{s}$ is calculated as follows: 
\begin{equation} \label{Eq:sensingdata_energy}
    E_{n}^{s} = q_{n}^{s}d_{n} = q_{n}^{s} r_{n} \tau^{s},
\end{equation}
where $q_{n}^{s}$ denotes the energy consumption per sensing data unit. Note that the sensing data is used for local training at MDs and is not shared with the server for centralized storage and training. In this way, data collected externally is only exploited for local training purposes while its privacy is enhanced via the use of FL. Furthermore, it is noted that the design problem of optimizing the data sensing time is equivalent to one that optimizes the sensing data of all MDs \cite{li2019wirelessly}. Thus, this work can be readily extended to an interesting scenario where the sensing data sizes of MDs are considered. 

\textbf{Local Computing}: 
After data sensing, the collected data is used for local training in the third phase. Denote by $C_{n}$ and $f_{n}$ the computation workload (in CPU cycles per bit) and computing capability (in CPU cycles per second) of MD $n$. Let $\tau_{n}^{l}$ be the local computing time of MD $n$ that can be expressed as 
\(\tau_{n}^{l} = \eta_{\text{loc}}{C_{n}d_{n}}/{f_{n}},\)
where $\eta_{\text{loc}}$ is the number of local learning iterations required by MD $n$ to achieve a desirable accuracy $\eta$ of the local model. Note that $\eta_{\text{loc}}$ can be specified based on the loss function of MDs and learning parameters, which will be detailed in the next part. 
We also note that the time of the third phase is given as $\tau^{l} = \max_{n \in \mathcal{N}} \{\tau_{n}^{l}\}$.
Following a practical model that was widely adopted in the literature \cite{pham2019coalitional, yang2021energyFL, pham2022energy}, the energy consumption in the local computing phase is given as follows: 
\begin{equation} \label{Eq:lc_energy}
    E_{n}^{l} = \eta_{\text{loc}}\zeta_{n}C_{n}d_{n}f_{n}^{2},
\end{equation}
where $\zeta_{n}$ is an energy coefficient depending on the CPU architecture of MD $n$, and $\zeta_{n}f_{n}^{2}$ expresses the energy consumption per CPU cycle. It is noticed that unlike prior studies on resource allocation for FL, where the training data size is given in advance, the training data in this work is dependent on the period of the second phase, which, in turn, affects the local training time and local energy consumption. Therefore, the completion time and energy consumption of MD $n$ can be adjusted by optimizing the computing rate, the sensing time, and the local model accuracy.

From the theoretical viewpoint, large training (i.e., sensing) data is beneficial for the learning process in AI/FL, especially in the later training rounds \cite{devarakonda2017adabatch, shi2022talk}. Nevertheless, too large training data at an FL iteration (i.e., batch) may not achieve as good performance as one can do with small training data and also greatly increases the local computing time and energy consumption. It is therefore necessary to decide on an appropriate sensing size, which in turn determines the sensing time in the considered S2FL network. In this work, it is assumed that each MD needs to sense an amount of $D_{0}$ data bits for local training at each learning round - in other words, the following constraint holds
\begin{equation} \label{Eq:trainingdata}
    d_{n} \geq D_{0} , \forall n \in \mathcal{N}.
\end{equation}
Note that an adaptive control of the sensing data sizes can further increase the learning performance, which is promising research but out of the scope of this work.

\subsection{Communication Model}
After the second phase of local computing, the model updates are transmitted to the edge server for model aggregation by employing the FDMA scheme. Denote by $B$ and $b_{n}$ the system bandwidth and the bandwidth assigned to MD $n$, respectively, i.e., $b_{n} \geq 0$ and $\sum\nolimits_{n \in \mathcal{N}}b_{n} \leq B$. The transmission rate of MD $n$ is given as
\begin{equation} \label{Eq:rate_FDMA}
    R_{n} = b_{n} \log_{2}\left( 1 + \frac{h_{n}p_{n}}{n_{0}b_{n}} \right),
\end{equation}
where $p_{n}$ is the transmit power of MD $n$ and $n_{0}$ is the noise power density. Let $s$ be the size of the local models, and it is supposed to be the same for all MDs \cite{yang2021energyFL, pham2022energy}. Then, the transmission rate in \eqref{Eq:rate_FDMA} is required to satisfy the following constraint in order to ensure successful transmission of the model updates to the edge server within the fourth phase: 
\begin{equation}
    R_{n} \tau^{c} \geq s, \forall n \in \mathcal{N}.
\end{equation}
The corresponding energy consumption to transmit the model update of size $s$ to the MEC server is $E_{n}^{c} = p_{n} \tau^{c}$. 

\subsection{Problem Formulation}
\label{SubSec:SystemModel_PF}
In the previous part, models necessary to formulate the optimization problem are presented. It is now assumed that the loss function of MDs is $L$-Lipschitz continuous and $\gamma$-strongly convex. As such, the minimum number of local rounds to achieve the accuracy of $\eta$ is given as \cite{yang2021energyFL, pham2022energy} 
\begin{equation} \label{Eq:localrounds}
    \eta_{\text{loc}} = \frac{2}{(2 - L \delta)\delta \gamma } \log_{2} \left(\frac{1}{\eta }\right),
\end{equation}
where the step $\delta$ is required to be less than $2 / L$, i.e., $\delta < 2 / L$. Once the local training process completes, MDs share the local model updates with the edge server for model aggregation. Then, the global model can reach an accuracy $\epsilon_{0}$ after a certain number of global communication rounds as 
\begin{equation} \label{Eq:globalrounds}
    \eta_{\text{glo}} = \frac{\text{2}L^{2}}{\gamma ^{2}\xi } \log \left(\frac{1}{\epsilon _{0}}\right) \frac{1}{1-\eta },
\end{equation}
where $\xi$ is a hyper-learning parameter that satisfies $\xi < \gamma / L$, and $\log(\cdot)$ denotes the natural logarithm. The proofs for the expressions \eqref{Eq:localrounds} and \eqref{Eq:globalrounds} in detail can be found in \cite{yang2021energyFL}; the way is to apply the first-order Taylor approximation and find the relative difference of the training problem's objective value in two consecutive iterations. It is observed from \eqref{Eq:globalrounds} that a high local model accuracy, i.e., a small value of $\eta$, can greatly boost the learning process, i.e., minifying the number of global rounds. However, a small value of $\eta$ may raise the issue of sizeable local computing resources and significantly expand the number of local rounds and the local training time, as expressed in \eqref{Eq:localrounds}. Therefore, it becomes important to optimize the local model accuracy to minimize the total completion time in S2FL networks while satisfying a couple of design constraints on transmit power, time allocation, computing rate, and bandwidth allocation.

We denote the optimization vectors of power transfer, transmit power allocation, bandwidth allocation, and time allocation as $\boldsymbol{P} \triangleq \{P_{n}\}$, $\boldsymbol{p} \triangleq \{p_{n}\}$, $\boldsymbol{f} \triangleq \{f_{n}\}$, $\boldsymbol{b} \triangleq \{b_{n}\}$, and $\boldsymbol{\tau} \triangleq \{\tau^{h}, \tau^{s}, \tau^{l}, \tau^{c}\}$, respectively. Here, the specific design of the optimization problem is to jointly optimize $(\boldsymbol{P}, \boldsymbol{p}, \boldsymbol{f}, \boldsymbol{b}, \boldsymbol{\tau}, \eta)$ to minimize the total completion time of the learning process. Here, the completion time for a given communication round is $\tau^{h} + \tau^{s} + \tau^{l} + \tau^{c}$, and the total completion time is given as 
\begin{equation} \label{Eq:totaltime}
    O(\boldsymbol{\tau},\eta) \triangleq \eta_{\text{glo}} (  \tau^{h} + \tau^{s} + \tau^{l} + \tau^{c}),
\end{equation}
which is considered as the objective function in the S2FL problem. 
Mathematically, we can formulate the optimization problem in the considered S2FL system as follows: 
\begin{subequations}
    \label{OptPrb}
	\begin{align} 
	& \underset{\boldsymbol{P}, \boldsymbol{p}, \boldsymbol{f}, \boldsymbol{b}, \boldsymbol{\tau}, \eta}{\minimize}
	& & \eta_{\text{glo}} (  \tau^{h} + \tau^{s} + \tau^{l} + \tau^{c}), \label{OptPrb:Obj}\\
	& \text{subject to} 
	& & E_{n}^{r} + E_{n}^{s} + E_{n}^{l} + E_{n}^{c} \leq E_{n}^{h}, \forall n \in \mathcal{N}, \label{OptPrb:energy}\\
	&&& \tau^{l} \geq \eta_{\text{loc}}{C_{n}d_{n}}/{f_{n}}, \forall n \in \mathcal{N}, \label{OptPrb:localtime}\\
	&&& \tau^{c}R_{n} \geq s, \forall n \in \mathcal{N}, \label{OptPrb:datasize}\\
	&&& \sum\nolimits_{n=1}^{N}P_{n}\tau^{h} \leq E_{\max},  \label{OptPrb:maxenergy}\\
	&&& r_{n}\tau^{s} \geq D_{0}, \forall n \in \mathcal{N}, \label{OptPrb:sensingdatasize}\\
	&&& 0 \leq p_{n} \leq p_{n}^{\max}, \forall n \in \mathcal{N}, \label{OptPrb:power}\\
	&&& \sum\nolimits_{n=1}^{N}P_{n} \leq P_{0}, P_{n} \geq 0, \forall n \in \mathcal{N}, \label{OptPrb:power_BS}\\
	&&& \sum\nolimits_{n \in \mathcal{N}}b_{n} \leq B, \; b_{n} \geq 0, \forall n \in \mathcal{N}, \label{OptPrb:bAllocation}\\
	&&& f_{n}^{\min} \leq f_{n} \leq f_{n}^{\max}, \forall n \in \mathcal{N}, \label{OptPrb:CPUfre}\\
	&&& 0 < \eta \leq 1. \label{OptPrb:accuracy}
	\end{align}
\end{subequations}
The constraint \eqref{OptPrb:maxenergy} guarantees that the consumption cost of the energy source in each learning round does not exceed an upper bound $E_{\max}$. 
The feasible ranges of transmit power, bandwidth, and CPU frequency are constrained in \eqref{OptPrb:power}, \eqref{OptPrb:bAllocation}, and \eqref{OptPrb:CPUfre}, respectively, where $B$ denotes the system bandwidth, $p_{n}^{\max}$ is the maximum transmit power of MD $n$, and $f_{n}^{\min}$ ($f_{n}^{\max}$) is the minimum (maximum) computing capability of MD $n$. The last constraint \eqref{OptPrb:accuracy} expresses that the local model accuracy is limited within the feasible range $(0,1]$. As aforementioned, the local model accuracy is optimized to boost the learning process, but meanwhile, it satisfies the constraints on energy consumption and the local learning process. 

The problem~\eqref{OptPrb} is non-convex due to the non-convex objective function and the non-convex constraints. For example, there is a coupling between the local model accuracy and the time allocation variables, which makes the objective function non-convex. 
For the constraints, the transmission time, the bandwidth allocation, and the transmit power are coupled in \eqref{OptPrb:datasize}, and thus \eqref{OptPrb:datasize} is a non-convex constraint. 
As such, no globally optimal solution, in general, can solve the problem efficiently. To this end, we develop a computationally-efficient path-following algorithm for the S2FL problem in \eqref{OptPrb}. Beforehand, it should be noted that the constraint on the sensing data size \eqref{OptPrb:sensingdatasize} is active at the MD with the slowest sensing rate; because otherwise, one can always reduce the sensing time until the constraint holds for all MDs and thus reduces the total completion time and learning overhead. 

\section{Joint Power Transfer, Data Sensing, Model Training, and Transmission}
\label{Sec:Solution}
As aforementioned, there are no efficient algorithms to obtain globally optimal solutions in general. In this section, we present an efficient algorithm to solve the highly complex non-convex problem in \eqref{OptPrb}. 
Particularly, we develop inner convex approximations of the non-convex constraints, based on which we devise a path-following algorithm for the S2FL optimization problem. 
We first denote $a = -{\text{2}L^{2}} \log \left({\epsilon _{0}}\right) / [{\gamma ^{2}\xi}]$ and $\nu = {2} / [{(2 - L \delta)\delta \gamma } \log{2}]$ to simplify  the expressions of learning rounds, and rewrite the problem as follows: 
\begin{subequations}
    \label{OptPrb_rw}
	\begin{align} 
        & \underset{\boldsymbol{P}, \boldsymbol{p}, \boldsymbol{f}, \boldsymbol{b}, \boldsymbol{\tau}, \eta}{\minimize}
        & & \frac{a}{1 - \eta} \left( \tau^{h} + \tau^{s} + \tau^{l} + \tau^{c} \right), \label{OptPrb_rw:Obj}\\
        & \text{subject to} 
        & & q_{n}^{r}r_{n}\tau^{s} + q_{n}^{s}r_{n}\tau^{s} + \nu\log\left(\frac{1}{\eta}\right) \zeta_{n}C_{n}r_{n}\tau^{s}f_{n}^{2} \notag\\
        &&& \qquad + p_{n} \tau^{c} \leq \varphi\tau^{h} P_{n} h_{n}, \forall n \in \mathcal{N}, \label{OptPrb_rw:energy}\\
        &&& \tau^{l} \geq \nu\log\left(\frac{1}{\eta}\right) \frac{C_{n} r_{n}\tau^{s}}{f_{n}}, \forall n \in \mathcal{N}, \label{OptPrb_rw:localtime}\\
        &&& \tau^{c}b_{n} \log_{2}\left( 1 + \frac{h_{n}p_{n}}{n_{0}b_{n}} \right) \geq s, \forall n \in \mathcal{N}, \label{OptPrb_rw:datasize}\\ 
        &&& \eqref{OptPrb:maxenergy}, \eqref{OptPrb:sensingdatasize}, \eqref{OptPrb:power}, \eqref{OptPrb:power_BS}, \eqref{OptPrb:bAllocation}, \eqref{OptPrb:CPUfre}, \eqref{OptPrb:accuracy}. \notag
	\end{align}
\end{subequations}
To solve this non-convex problem, we develop inner convex approximations to devise an efficient solution approach. In particular, we propose to decompose the problem into subproblems and propose to solve them alternatively in an iterative manner. Specifically, the entire set of optimization variables $(\boldsymbol{P}, \boldsymbol{p}, \boldsymbol{f}, \boldsymbol{b}, \boldsymbol{\tau}, \eta)$ is decomposed into two subsets: 1) the first one includes the local model accuracy $\eta$; and 2) the second one of resource allocation includes the time allocation $\boldsymbol{\tau}$ and the computing rates (i.e., CPU frequencies) $\boldsymbol{f}$ of MDs, the bandwidth allocation $\boldsymbol{b}$, the transmit power allocation of MDs $\boldsymbol{p}$, and the transmit power of the access point $\boldsymbol{P}$. The fact is that the local model accuracy remains unchanged for a much longer duration compared with that of resource allocation \cite{vu2020cell}. Exploiting this timescale property, the proposed algorithms can be implemented in a two-timescale manner when the local model accuracy is updated at a long timescale, and the latter is updated at a short timescale.

Let $\kappa$ be the iteration index, and $x^{(\kappa)}$ be the value of the variable $x$ at step $\kappa$. 
In the following, we demonstrate the development of our proposed iterative algorithm with inner convex approximations of the non-convex constraints.

\subsection{Long-timescale Problem to Optimize Model Accuracy}
Given the values $(\boldsymbol{P}, \boldsymbol{\tau}, \boldsymbol{f}, \boldsymbol{b}, \boldsymbol{p})$ from the previous step $(\kappa-1)$, we find the local model accuracy at step $\kappa$.  
The model training problem to find the local model accuracy can be expressed as follows: 
\begin{subequations}
    \label{OptPrb_eta}
	\begin{align} 
	& \underset{\eta, \tau^{l}}{\minimize}
	& & \frac{a}{1 - \eta} \left( \iota^{(\kappa-1)} + \tau^{l} \right), \label{OptPrb_eta:Obj}\\
	& \text{subject to} 
	& & \nu\log\left({1}/{\eta}\right) \zeta_{n}C_{n}r_{n}\tau^{s}f_{n}^{2} \leq \psi_{n}, \forall n \in \mathcal{N}, \label{OptPrb_eta:energy}\\
	&&& \tau^{l} \geq \nu\log\left(\frac{1}{\eta}\right) \frac{C_{n} r_{n}\tau^{s}}{f_{n}}, \forall n \in \mathcal{N}, \label{OptPrb_eta:localtime}\\
	&&& 0 < \eta \leq 1, \label{OptPrb_eta:accuracy}
	\end{align}
\end{subequations}
where $\iota^{(\kappa-1)} = \tau^{h, (\kappa-1)} + \tau^{s, (\kappa-1)} + \tau^{c, (\kappa-1)}$, and $\psi_{n} \triangleq \varphi\tau^{h} P_{n}^{(\kappa-1)} h_{n} - p_{n} \tau^{c} - q_{n}^{s}r_{n}\tau^{s} - q_{n}^{r}r_{n}\tau^{s}$. Note that the constraint \eqref{OptPrb_eta:localtime} and the local computing time $\tau^{l}$ are considered in this problem as well as the resource allocation problem of the second block. It is because the variable $\tau^{l}$ does not only affect the local model accuracy but also decides the completion time in a given learning round. 
Since $\log(1/x)$ is a convex function over $\mathbb{R}^{+}$ \cite{boyd2004convex}, the constraints \eqref{OptPrb_eta:energy} and \eqref{OptPrb_eta:localtime} are convex. Accordingly, all the constraints of the problem~\eqref{OptPrb_eta} are convex; however, the objective function is not convex due to its fractional nature. 

For approximation purposes, the objective function $f(\eta, \tau^{l})$ can be rewritten as follows:
\begin{equation} \label{Eq:OptPrb_eta_eq_Obj_rw}
    f(\eta, \tau^{l}) = \frac{a \iota^{(\kappa-1)}}{1 - \eta} + \frac{a \tau^{l}}{1 - \eta}. 
\end{equation}
While the first term in \eqref{Eq:OptPrb_eta_eq_Obj_rw} is convex with respect to $\eta$, the second term is not convex. 
Fortunately, we leverage the following inequality to convexify the second term
\begin{align} \label{Ineq:trickyineq}
    \frac{a \tau^{l}}{1 - \eta}
    & = \frac{a\tau^{l,(\kappa-1)}}{4(1 - \eta^{(\kappa-1)})} \notag\\
    & \quad \times \Biggl( \left( \frac{\tau^{l}}{\tau^{l,(\kappa-1)}} + \frac{1 - \eta^{(\kappa-1)}}{1 - \eta} \right)^{2} \notag\\
    & \quad\qquad - \left( \frac{\tau^{l}}{\tau^{l,(\kappa-1)}} - \frac{1 - \eta^{(\kappa-1)}}{1 - \eta} \right)^{2} \Biggl) \notag\\
    & \leq \frac{a\tau^{l,(\kappa-1)}}{4(1 - \eta^{(\kappa-1)})} \left( \frac{\tau^{l}}{\tau^{l,(\kappa-1)}} + \frac{1 - \eta^{(\kappa-1)}}{1 - \eta} \right)^{2}  \notag\\
    & \triangleq j^{(\kappa)}(\tau^{l}, \eta).
\end{align}
By a number of simple calculations, one can prove that the Hessian of $j^{(\kappa)}(\tau^{l}, \eta)$ is positive semidefinite, $j^{(\kappa)}(\tau^{l}, \eta)$ is therefore a convex function \cite{boyd2004convex}. 
As a result, $f(\eta, \tau^{l})$ can be innerly approximated as follows: 
\begin{align}
    f(\eta, \tau^{l}) 
    \leq \frac{a \iota^{(\kappa-1)}}{1 - \eta} + j^{(\kappa)}(\tau^{l}, \eta) \triangleq f^{(\kappa)}(\eta, \tau^{l}),
\end{align}
which is the sum of two convex functions, thus being a concave function.
Therefore, we solve the following optimization problem to update the local model accuracy $\eta^{(\kappa)}$ at step $\kappa$
\begin{subequations}
    \label{OptPrb_eta_cvx}
	\begin{align} 
	& \underset{\eta, \tau^{l}}{\maximize}
	& & f^{(\kappa)}(\eta, \tau^{l}), \label{OptPrb_eta_cvx:Obj}\\
	& \text{subject to} 
	& & \eqref{OptPrb_eta:energy}, \eqref{OptPrb_eta:localtime}, \eqref{OptPrb_eta:accuracy}. \notag
	\end{align}
\end{subequations}
This problem is convex; therefore, we can obtain the optimal solution using the CVX package \cite{Grant2013CVX}. 

\subsection{Short-timescale Problem to Optimize Resource Allocation}
Given the local model accuracy, we find the optimal values of the time allocation $\boldsymbol{\tau}$, and the computing rates $\boldsymbol{f}$, bandwidth allocation $\boldsymbol{b}$, transmit power $\boldsymbol{p}$ of MDs, and transmit power of the access point $\boldsymbol{P}$ at step $\kappa$. Given $\eta^{(\kappa)}$ at step $\kappa$, the numbers of local learning rounds $\eta_{\text{loc}}$ and global communication rounds $\eta_{\text{glo}}$ can be calculated in advance according to \eqref{Eq:localrounds} and \eqref{Eq:globalrounds}, respectively. On this point, the objective can be simplified as one of minimizing the completion time of a learning round. Specifically, we rewrite the resource allocation problem of $(\boldsymbol{P}, \boldsymbol{\tau}, \boldsymbol{f}, \boldsymbol{b}, \boldsymbol{p})$ as follows: 
\begin{subequations}
    \label{OptPrb_noneta}
	\begin{align} 
        & \underset{\boldsymbol{P}, \boldsymbol{p}, \boldsymbol{f}, \boldsymbol{b}, \boldsymbol{\tau}}{\minimize}
        & & \eta_{\text{glo}}^{(\kappa)} \left( \tau^{h} + \tau^{s} + \tau^{l} + \tau^{c} \right), \label{OptPrb_noneta:Obj}\\
	& \text{subject to} 
        & & q_{n}^{r}r_{n}\tau^{s} + q_{n}^{s}r_{n}\tau^{s} + \eta_{\text{loc}}^{(\kappa)} \zeta_{n}C_{n}r_{n}\tau^{s}f_{n}^{2} + p_{n} \tau^{c} \notag\\
        &&& \qquad \leq \varphi\tau^{h} P_{n} h_{n}, \forall n \in \mathcal{N}, \label{OptPrb_noneta:energy}\\
        &&& \tau^{l} \geq \eta_{\text{loc}}^{(\kappa)}C_{n} r_{n} \frac{\tau^{s}}{f_{n}}, \forall n \in \mathcal{N}, \label{OptPrb_noneta:localtime}\\
        &&& \tau^{c}b_{n} \log\left( 1 + \frac{h_{n}p_{n}}{n_{0}b_{n}} \right) \geq \tilde{s}, \forall n \in \mathcal{N}, \label{OptPrb_noneta:datasize}\\ 
        &&& \eqref{OptPrb:maxenergy}, \eqref{OptPrb:sensingdatasize}, \eqref{OptPrb:power}, \eqref{OptPrb:power_BS}, \eqref{OptPrb:bAllocation}, \eqref{OptPrb:CPUfre}, \notag
	\end{align}
\end{subequations}
where $\tilde{s} = s \log2$, $\eta_{\text{glo}}^{(\kappa)} = a / (1 - \eta^{(\kappa)})$, $\eta_{\text{loc}}^{(\kappa)} = \nu \log(1/ \eta^{(\kappa)})$, and $\boldsymbol{\tau}$ is redefined as $\boldsymbol{\tau} \triangleq \{\tau^{h}, \tau^{s}, \tau^{l}, \tau^{c}\}$. It can be observed that the local computing time variable $\tau^l$ is considered in both subproblems \eqref{OptPrb_eta} and \eqref{OptPrb_noneta} since it does affect not only the local model accuracy but also other resource allocation parameters. 
Although the objective function and the last three constraints are convex, the problem in~\eqref{OptPrb_noneta} is still non-convex due to the non-convexity of the first three constraints. Beforehand, we would note that the constraints \eqref{OptPrb:maxenergy}, \eqref{OptPrb:sensingdatasize}, \eqref{OptPrb:power}, \eqref{OptPrb:power_BS}, \eqref{OptPrb:bAllocation}, and \eqref{OptPrb:CPUfre} are convex for given $\eta$. It is now tasked with innerly approximating the non-convex constraints \eqref{OptPrb_noneta:energy}, \eqref{OptPrb_noneta:localtime}, and \eqref{OptPrb_noneta:datasize} to build a convex program of \eqref{OptPrb_noneta}.

We introduce the following lemma to build the inner convex approximations of the non-convex constraints in \eqref{OptPrb_noneta}. 
\begin{lemma}
Over the feasible domain $\{x > 0, y > 0\}$, $\sqrt{xy}$ is a concave function. Considering that $\bar{x} > 0$ and $\bar{y} > 0$, the following inequality holds:
\begin{equation} \label{Ineq:sqrt_xy}
    \sqrt{xy} \leq \frac{1}{2}\left( \frac{\sqrt{\bar{y}}}{\sqrt{\bar{x}}}x + \frac{\sqrt{\bar{x}}}{\sqrt{\bar{y}}}y \right).
\end{equation}
\end{lemma}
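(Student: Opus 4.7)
The plan is to split the lemma into its two assertions and handle each separately, since they come from a single underlying fact (weighted AM--GM for the geometric mean) but deserve independent treatment.

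For the concavity claim, I would write $f(x,y) = \sqrt{xy} = x^{1/2}y^{1/2}$ on $\{x>0,\,y>0\}$ and compute its Hessian directly. The second partials are $f_{xx} = -\tfrac{1}{4}x^{-3/2}y^{1/2}$, $f_{yy} = -\tfrac{1}{4}x^{1/2}y^{-3/2}$, and $f_{xy} = \tfrac{1}{4}x^{-1/2}y^{-1/2}$. A short calculation shows $\det H = f_{xx}f_{yy} - f_{xy}^2 = 0$ and $\mathrm{tr}\, H = -\tfrac{1}{4}(x^{-3/2}y^{1/2} + x^{1/2}y^{-3/2}) < 0$, so the eigenvalues of $H$ are $0$ and a strictly negative number. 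Hence $H \preceq 0$ on the open positive orthant, and $f$ is concave there.

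For the inequality, my plan is to reduce it to the plain AM--GM inequality $\sqrt{uv} \le \tfrac{1}{2}(u+v)$ for $u,v \ge 0$ by a rescaling that exploits the reference point $(\bar x,\bar y)$. Introduce
\begin{equation}
u \triangleq \frac{\sqrt{\bar y}}{\sqrt{\bar x}}\, x, \qquad v \triangleq \frac{\sqrt{\bar x}}{\sqrt{\bar y}}\, y,
\end{equation}
which are both nonnegative whenever $x,y\ge 0$ and $\bar x,\bar y > 0$. Then $uv = xy$, so $\sqrt{xy} = \sqrt{uv} \le (u+v)/2$, which is exactly the right-hand side of \eqref{Ineq:sqrt_xy}. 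As a sanity check, plugging in $(x,y)=(\bar x,\bar y)$ gives equality on both sides, confirming that the linear upper bound touches $\sqrt{xy}$ at the expansion point; this is consistent with the concavity proved above, since the bound is precisely the first-order Taylor expansion (supporting hyperplane from above) of the concave function $\sqrt{xy}$ at $(\bar x,\bar y)$.

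No step is really hard. The only minor pitfall is making sure the rescaled variables $u,v$ are well-defined, which requires $\bar x, \bar y > 0$ strictly; this is already part of the lemma's hypothesis, so nothing extra is needed. An alternative, slightly slicker route would be to skip the Hessian computation and instead derive \eqref{Ineq:sqrt_xy} first from AM--GM, then deduce concavity from the fact that $\sqrt{xy}$ is majorized by a linear function tangent at every interior point; but the Hessian argument is more direct and is what I would actually write down.
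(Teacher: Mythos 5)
Your proof is correct, but it takes a genuinely different route from the paper's. The paper proves the inequality in one line by invoking the first-order condition for concave functions: since $\sqrt{xy}$ is concave, it lies below its tangent plane at $(\bar{x},\bar{y})$, and expanding $f(\bar{x},\bar{y}) + \langle\nabla f(\bar{x},\bar{y}), (x,y)-(\bar{x},\bar{y})\rangle$ gives exactly the right-hand side of \eqref{Ineq:sqrt_xy}; concavity of the geometric mean is simply asserted with a citation to Boyd and Vandenberghe. You instead (i) verify concavity explicitly via the Hessian (correctly: $\det H = 0$, $\mathrm{tr}\,H < 0$, so $H \preceq 0$), and (ii) obtain the inequality by the substitution $u = \sqrt{\bar{y}/\bar{x}}\,x$, $v = \sqrt{\bar{x}/\bar{y}}\,y$ followed by plain AM--GM, noting only afterwards that this coincides with the tangent-plane bound. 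Your route is more elementary and self-contained -- the AM--GM step does not even require concavity and extends to the closed orthant $x,y\ge 0$ -- and your Hessian computation supplies the proof of the concavity claim that the paper omits. The paper's route, on the other hand, is the template it reuses for Lemmas 2 and 3 (first-order conditions applied to convex or concave functions around a feasible point), so it fits the path-following machinery more uniformly; your recognition that the AM--GM bound is precisely the supporting hyperplane at $(\bar{x},\bar{y})$ shows the two arguments are two faces of the same fact.
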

\begin{proof}
Following the first-order condition of a concave function \cite{boyd2004convex}, we have 
\begin{align}
    f(x,y) 
    & \triangleq \sqrt{xy} \notag\\ 
    & \leq f(\bar{x},\bar{y}) + \langle\nabla f(\bar{x},\bar{y}), (x,y) - (\bar{x},\bar{y})\rangle \notag\\
    & = \frac{1}{2}\left( \frac{\sqrt{\bar{y}}}{\sqrt{\bar{x}}}x + \frac{\sqrt{\bar{x}}}{\sqrt{\bar{y}}}y \right),
\end{align}
which is approximated around the feasible point $\{\bar{x} > 0, \bar{y} > 0\}$. Thus, we obtain the inequality in~\eqref{Ineq:sqrt_xy}. The proof ends.
\end{proof}
Based on~\eqref{Ineq:sqrt_xy} and by setting $t = \sqrt{x}$, $z^{2} = \sqrt{y}$, $\bar{t} = \sqrt{\bar{x}}$, and $\bar{z}^{2} = \sqrt{\bar{y}}$, we can also achieve the following inequality
\begin{equation} \label{Ineq:xy2}
    {tz^{2}} \leq \frac{1}{2}\left( \frac{\bar{z}^{2}}{{\bar{t}}}t^{2} + \frac{{\bar{t}}}{\bar{z}^{2}}z^{4} \right).
\end{equation}
Similarly, based on~\eqref{Ineq:sqrt_xy} and by setting $t = \sqrt{x}$, $z = \sqrt{y}$, $\bar{t} = \sqrt{\bar{x}}$, and $\bar{z} = \sqrt{\bar{y}}$, we can achieve the following inequality
\begin{equation} \label{Ineq:xy}
    {tz} \leq \frac{1}{2}\left( \frac{{\bar{z}}}{{\bar{t}}}t^{2} + \frac{{\bar{t}}}{{\bar{z}}}z^{2} \right).
\end{equation}

\textbf{Convexity of} \eqref{OptPrb_noneta:energy}: 
We follow the inequality in~\eqref{Ineq:xy} with $t = \tau^{s}$, $z = f_{n}$, $\bar{t} = \tau^{s,(\kappa-1)}$, and $\bar{z} = f_{n}^{(\kappa-1)}$ to generate the upper bound of the second term of \eqref{OptPrb_noneta:energy}.
Moreover, by following the inequality in~\eqref{Ineq:xy} with $t = p_{n}$, $z = \tau^{c}$, $\bar{t} = p_{n}^{(\kappa-1)}$, and $\bar{z} = \tau^{c,(\kappa-1)}$, we can generate the upper bound of the third term. As the result, the non-convex constraint in \eqref{OptPrb_noneta:energy} can be innerly approximated as follows:
\begin{align} \label{OptPrb_noneta:energy_app}
    & q_{n}^{s}r_{n}\tau^{s} + q^{(\kappa)}(\tau^{s}, f_{n}) + w^{(\kappa)}(p_{n}, \tau^{c}) \notag\\
    & \qquad \leq \varphi\tau^{h} P_{n} h_{n}, \forall n \in \mathcal{N}, 
\end{align}
where we define 
\begin{align}
    & q^{(\kappa)}(\tau^{s}, f_{n}) = \frac{\eta_{\text{loc}}^{(\kappa)} \zeta_{n}C_{n}r_{n}}{2} \notag\\
    & \qquad\qquad\quad \times \left( \frac{(f_{n}^{(\kappa-1)})^{2}}{\tau^{s,(\kappa-1)}}(\tau^{s})^{2} + \frac{\tau^{s,(\kappa-1)}}{(f_{n}^{(\kappa-1)})^{2}}(f_{n})^{4} \right), \\
    & w^{(\kappa)}(p_{n}, \tau^{c}) = \frac{1}{2} \left( \frac{\tau^{c,(\kappa-1)}}{p_{n}^{(\kappa-1)}}(p_{n})^{2} + \frac{p_{n}^{(\kappa-1)}}{\tau^{c,(\kappa-1)}}(\tau^{c})^{2} \right).
\end{align}
However, \eqref{OptPrb_noneta:energy_app} is still non-convex due to the non-convex part on the right-hand side. We introduce the following lemma to approximate the non-convex part of \eqref{OptPrb_noneta:energy_app}.
\begin{lemma}
For all $t > 0$, $z > 0$, $\bar{t} > 0$, $\bar{z} > 0$, the function $f(t,z) = tz$ can be innerly approximated as follows:
\begin{align} \label{Ineq:tz}
    f(t,z) \geq (2\bar{z}\sqrt{\bar{t}})t^{1/2} - (\bar{t} \bar{z}^{2})z^{-1}.
\end{align}
\end{lemma}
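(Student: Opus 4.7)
The plan is to reduce the claimed inequality to a standard AM--GM step after a suitable multiplication. Starting from the target
\[
tz \;\geq\; (2\bar{z}\sqrt{\bar{t}})\,t^{1/2} \;-\; (\bar{t}\bar{z}^{2})\,z^{-1},
\]
I would move the negative term to the left-hand side and then multiply through by $z>0$ (which preserves the direction of the inequality). This converts the problem to proving
\[
tz^{2} + \bar{t}\bar{z}^{2} \;\geq\; 2\,\bar{z}\sqrt{\bar{t}}\,\sqrt{t}\,z,
\]
which is exactly the AM--GM inequality applied to the two positive quantities $tz^{2}$ and $\bar{t}\bar{z}^{2}$, since their geometric mean is $\sqrt{tz^{2}\cdot\bar{t}\bar{z}^{2}} = \bar{z}\sqrt{\bar{t}}\,\sqrt{t}\,z$. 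So the core of the proof is a one-line AM--GM argument after the right rearrangement.

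Next I would verify tightness at the expansion point $(t,z)=(\bar{t},\bar{z})$ by direct substitution: the right-hand side becomes $2\bar{z}\sqrt{\bar{t}}\cdot\sqrt{\bar{t}} - \bar{t}\bar{z}^{2}/\bar{z} = 2\bar{t}\bar{z}-\bar{t}\bar{z}=\bar{t}\bar{z}$, matching $f(\bar{t},\bar{z})$. Equivalently, tightness follows from the equality case of AM--GM, which holds precisely when $tz^{2}=\bar{t}\bar{z}^{2}$, and in particular at $(t,z)=(\bar{t},\bar{z})$. This confirms that the bound is a genuine \emph{inner} approximation consistent with the previous lemma's pattern.

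Finally, to motivate why this is the useful form in the surrounding context, I would remark that the right-hand side is a sum of a concave function of $t$ (namely $t^{1/2}$) and a concave function of $z$ (namely $-z^{-1}$), hence concave jointly in $(t,z)$. Thus any non-convex constraint of the form ``$\text{(something)} \leq tz$'' can be tightened to a convex inequality by replacing $tz$ with this concave lower bound, which is exactly how the lemma will be invoked in the approximation of the right-hand side of \eqref{OptPrb_noneta:energy_app}.

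I do not anticipate a genuine obstacle: the only non-mechanical step is spotting that multiplying by $z$ exposes an AM--GM structure, and once that is seen, the rest is routine verification of equality at the anchor point and a one-line concavity observation.
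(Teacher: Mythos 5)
Your proof is correct, but it takes a genuinely different route from the paper's. The paper derives the bound constructively: it notes that the quadratic-over-linear function $g(x,y)=x^{2}/y$ is convex for $y>0$, writes down its first-order (tangent-plane) lower bound $g(x,y)\geq \tfrac{2\bar{x}}{\bar{y}}x-\tfrac{\bar{x}^{2}}{\bar{y}^{2}}y$ at $(\bar{x},\bar{y})$, and then substitutes $x=\sqrt{t}$, $y=1/z$, $\bar{x}=\sqrt{\bar{t}}$, $\bar{y}=1/\bar{z}$ so that $x^{2}/y=tz$. You instead verify the stated inequality directly: after adding $\bar{t}\bar{z}^{2}z^{-1}$ to both sides and multiplying by $z>0$, it becomes
\begin{equation*}
tz^{2}+\bar{t}\bar{z}^{2}\;\geq\;2\bar{z}\sqrt{\bar{t}}\,\sqrt{t}\,z,
\end{equation*}
which is AM--GM on the positive quantities $tz^{2}$ and $\bar{t}\bar{z}^{2}$. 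Both arguments are sound and both confirm equality at $(t,z)=(\bar{t},\bar{z})$; your AM--GM route is more elementary and even pins down the full equality set $tz^{2}=\bar{t}\bar{z}^{2}$, while the paper's tangent-plane derivation explains \emph{where} the bound comes from and is the systematic recipe (linearize a convex surrogate) that generates the analogous approximations used elsewhere in the paper, e.g.\ in Lemmas~1 and~3. Your closing observation that the right-hand side is jointly concave in $(t,z)$ — as a sum of the concave functions $t^{1/2}$ and $-z^{-1}$ — is exactly the property the paper needs when it applies the lemma to convexify the constraint \eqref{OptPrb_noneta:energy_app}.
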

\begin{proof}
The fact that quadratic-over-linear function $f(x,y) = x^2/y$ is convex with $y > 0$ \cite{boyd2004convex}. Thus, its lower-bound concave approximation around the point $(\bar{x},\bar{y})$ is 
\begin{align}
    f(x,y) = \frac{x^2}{y} 
    & \geq f(\bar{x},\bar{y}) + \langle\nabla f(\bar{x},\bar{y}), (x,y) - (\bar{x},\bar{y})\rangle \notag\\
    & = \frac{2\bar{x}}{\bar{y}}x - \frac{\bar{x}^{2}}{\bar{y}^{2}}y.
\end{align}
As a result, by setting $x = \sqrt{t}$, $y = 1/z$, $\bar{x} = \sqrt{\bar{t}}$, and $\bar{y} = 1/\bar{z}$, we can deduce the inequality in \eqref{Ineq:tz}. The proof ends.
\end{proof}

Following \eqref{Ineq:tz} with with $t = \tau^{h}$, $z = P_{n}$, $\bar{t} = \tau^{h,(\kappa-1)}$, and $\bar{z} = P_{n}^{(\kappa-1)}$, we have the following approximation 
\begin{align}
    \varphi\tau^{h} P_{n} h_{n} 
    & \geq \varphi h_{n} \left( (2P_{n}^{(\kappa-1)}\sqrt{\tau^{h,(\kappa-1)}})(\tau^{h})^{1/2} \right.\notag\\
    & \left. \qquad\; - (\tau^{h,(\kappa-1)} (P_{n}^{(\kappa-1)})^{2})(P_{n})^{-1} \right) \notag\\
    & \triangleq v^{(\kappa)}(\tau^{h}, P_{n}),
\end{align}
which is a concave function of $(\tau^{h}, P_{n})$.
Therefore, we can innerly approximate the non-convex constraint in \eqref{OptPrb_noneta:energy} as
\begin{align} \label{OptPrb_noneta:energy_app1}
    & q_{n}^{r}r_{n}\tau^{s} +  q_{n}^{s}r_{n}\tau^{s} + q^{(\kappa)}(\tau^{s}, f_{n}) + w^{(\kappa)}(p_{n}, \tau^{c}) \notag\\
    & \qquad \leq v^{(\kappa)}(\tau^{h}, P_{n}), \forall n \in \mathcal{N}, 
\end{align}
which is now a convex constraint.

\textbf{Convexity of} \eqref{OptPrb_noneta:localtime}: 
Similar to \eqref{Ineq:trickyineq}, we solve the non-convexity issue of the second constraint \eqref{OptPrb_noneta:localtime} via the following tricky approximation:
\begin{align}
    \chi_{n}^{(\kappa)}\frac{\tau^{s}}{f_{n}} 
    & = \frac{\chi_{n}^{(\kappa)}}{4}\frac{\tau^{s,(\tau)}}{f_{n}^{(\kappa)}} \notag\\
    & \quad \times \left( \left( \frac{\tau^{s}}{\tau^{s,(\tau)}} + \frac{f_{n}^{(\kappa)}}{f_{n}} \right)^{2} - \left( \frac{\tau^{s}}{\tau^{s,(\tau)}} - \frac{f_{n}^{(\kappa)}}{f_{n}} \right)^{2} \right) \notag\\
    & \leq \frac{\chi_{n}^{(\kappa)}}{4}\frac{\tau^{s,(\tau)}}{f_{n}^{(\kappa)}} \left( \frac{\tau^{s}}{\tau^{s,(\tau)}} + \frac{f_{n}^{(\kappa)}}{f_{n}} \right)^{2} \notag\\
    & \triangleq g^{(\kappa)}(\tau^{s}, f_{n}).
\end{align}
where $\chi_{n}^{(\kappa)} = \eta_{\text{loc}}^{(\kappa)}C_{n} r_{n}$.
Since $g^{(\kappa)}(\tau^{s}, f_{n})$ is a convex function \cite{boyd2004convex}, the non-convex constraint in \eqref{OptPrb_noneta:localtime} can be innerly approximated as follows: 
\begin{equation}
    \tau^{l} \geq g^{(\kappa)}(\tau^{s}, f_{n}), \forall n \in \mathcal{N}.
\end{equation}

\textbf{Convexity of} \eqref{OptPrb_noneta:datasize}: To address the non-convex constraint in  \eqref{OptPrb_noneta:datasize}, we introduce the following lemma.
\begin{lemma}
Over the feasible domain that $x > 0$, $y > 0$, and $z > 0$, the function $f(x,y,z)$ is convex. Considering that $\bar{x} > 0$, $\bar{y} > 0$, and $\bar{z} > 0$, the following inequality holds
\begin{align} \label{Ineq:logxyz}
    z\log\left( 1 + \frac{1}{xy} \right) 
    & \geq 2\bar{z}\log\left(1 + \frac{1}{\bar{x}\bar{y}}\right) \notag\\
    & \;\;\; + \frac{\bar{z}}{1 + \bar{x}\bar{y}}\left(2 - \frac{x}{\bar{x}} - \frac{y}{\bar{y}}\right) \notag\\
    & \;\;\; - \frac{\bar{z}^{2}\log(1 + 1/\bar{x}\bar{y})}{z}.
\end{align}
\end{lemma}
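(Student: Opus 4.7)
My plan is to construct a concave minorant of $z\log(1 + 1/(xy))$ by combining convexity of the $(x,y)$-kernel with the quadratic-over-linear (AM-GM) trick already used in Lemma~2.

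First, I would prove that $A(x,y) := \log(1 + 1/(xy))$ is convex on $\{x>0,y>0\}$. The cleanest route is to write $A(x,y) = \phi(\log x + \log y)$ with $\phi(s) = \log(1 + e^{-s})$; since $\phi$ is convex and non-increasing and $(x,y)\mapsto\log x + \log y$ is concave, the composition is convex. (Equivalently, explicit Hessian computation gives positive diagonal entries and determinant $(3x^{2}y^{2} + 4xy + 1)/[x^{2}y^{2}(xy+1)^{4}] > 0$.) By the first-order characterization and direct computation of $\partial_{x}A(\bar{x},\bar{y}) = -1/[\bar{x}(1+\bar{x}\bar{y})]$ with its $y$-analogue, I then obtain the tangent-plane underestimator $A(x,y) \geq \log(1 + 1/(\bar{x}\bar{y})) + \tfrac{1}{1+\bar{x}\bar{y}}(2 - x/\bar{x} - y/\bar{y})$, which is exactly the linear term that appears on the right-hand side of the lemma.

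Next I would couple in the $z$-variable via the same quadratic-over-linear step underpinning Lemma~2. Writing $\bar{A} := A(\bar{x},\bar{y}) > 0$, AM-GM on the positive pair $\{z\bar{A},\bar{z}^{2}\bar{A}/z\}$ gives $z\bar{A} + \bar{z}^{2}\bar{A}/z \geq 2\bar{z}\bar{A}$, i.e.\ $z\bar{A} \geq 2\bar{z}\bar{A} - \bar{z}^{2}\bar{A}/z$. Multiplying the $(x,y)$-underestimator of Step~1 by the positive scalar $\bar{z}$ and adding it to this AM-GM bound, then forcing tightness and gradient matching at the expansion point $(\bar{x},\bar{y},\bar{z})$, pins down precisely the three coefficients $2\bar{z}\bar{A}$, $\bar{z}/(1+\bar{x}\bar{y})$, and $-\bar{z}^{2}\bar{A}/z$ displayed in the lemma. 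Concavity of the resulting minorant in $(x,y,z)$ is then immediate: the first term is a constant, the second is affine in $(x,y)$, and $-\bar{z}^{2}\bar{A}/z$ is concave in $z>0$.

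The hard part will be the final combining step. Substituting only the first-order $A$-underestimator into $zA$ leaves a residual of the form $(z-\bar{z})\bigl[(z-\bar{z})\bar{A}/z + \delta_{A}'\bigr]$ with $\delta_{A}' = \tfrac{1}{1+\bar{x}\bar{y}}(2-x/\bar{x}-y/\bar{y})$, and this residual is not manifestly non-negative because the two factors can have opposite signs. Establishing global validity therefore requires exploiting the curvature of $A$ beyond the tangent-plane bound — most naturally, a quantitative (second-order) refinement such as $4\bar{A}(A - A_{\mathrm{lin}}) \geq (\delta_{A}')^{2}$ whenever $2\bar{A}+\delta_{A}'\geq 0$, which is the squared form of $2\sqrt{\bar{A}A}\geq \bar{A}+A_{\mathrm{lin}}$ and reduces the remaining inequality to a verifiable algebraic identity for the specific kernel $\log(1+1/(xy))$. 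Once this residual estimate is in hand, combining it with the AM-GM inequality for $z$ delivers the claimed concave minorant.
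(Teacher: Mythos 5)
Your Step~1 is correct and checkable: the composition argument for convexity of $A(x,y)=\log(1+1/(xy))$, the gradient $\partial_{x}A(\bar{x},\bar{y})=-1/[\bar{x}(1+\bar{x}\bar{y})]$, and the Hessian determinant all verify, and the AM--GM step in $z$ is fine. But the proof is not complete: the ``hard part'' you flag at the end is the entire content of the lemma, and you do not close it. Your combination needs the sub-claim $2\sqrt{\bar{A}\,A(x,y)}\geq \bar{A}+A_{\mathrm{lin}}(x,y)$, since then $zA+\bar{z}^{2}\bar{A}/z\geq 2\bar{z}\sqrt{A\bar{A}}\geq \bar{z}(\bar{A}+A_{\mathrm{lin}})$ yields \eqref{Ineq:logxyz}. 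Dividing that sub-claim by $2\sqrt{\bar{A}}$ shows it is exactly the assertion that $\sqrt{A}$ dominates its tangent plane at $(\bar{x},\bar{y})$, i.e.\ that $\sqrt{\log(1+1/(xy))}$ is convex on the positive orthant. This is strictly stronger than convexity of $A$ (in general $\sqrt{h}$ need not be convex when $h$ is), it is a transcendental inequality rather than a ``verifiable algebraic identity,'' and --- via the standard equivalence that $h(x,y)/t$ is jointly convex on $t>0$ if and only if $\sqrt{h}$ is convex (for $h\geq 0$, write $h/t=(\sqrt{h})^{2}/t$) --- it is equivalent to the joint convexity of $(x,y,t)\mapsto t^{-1}\log(1+1/(xy))$.

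That joint convexity is precisely the fact the paper's proof invokes from \cite{nasir2019uav}: the paper writes the first-order (tangent-plane) condition for this three-variable convex function at $(\bar{x},\bar{y},\bar{t})$, computes the gradient, and substitutes $t=1/z$, $\bar{t}=1/\bar{z}$ to land directly on \eqref{Ineq:logxyz}. So your route, once completed, must reprove the cited result in disguise; as written, the decisive inequality is proposed but not established. To repair the argument, either cite the joint convexity of $t^{-1}\log(1+1/(xy))$ directly (which makes the two-stage decomposition unnecessary) or supply an explicit second-order verification that $\sqrt{\log(1+1/(xy))}$ is convex.
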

\begin{proof}
Over the domain $x > 0$, $y > 0$, and $t > 0$, the function $f(x,y,t) = t^{-1}\log(1+1/xy)$ is shown to be convex \cite{nasir2019uav}. Thus, by applying the first-order condition of the convex function $f(x,y,t)$, we have 
\begin{align}
    f(x,y,t) 
    & = \frac{1}{t} \log\left(1+\frac{1}{xy}\right) \notag\\
    & \geq f(\bar{x},\bar{y},\bar{t}) + \langle\nabla f(\bar{x},\bar{y},\bar{t}), (x,y,t) - (\bar{x},\bar{y},\bar{t})\rangle \notag\\
    & = \frac{2\log\left(1 + {1}/{\bar{x}\bar{y}}\right)}{\bar{t}}
    + \frac{1}{\bar{t}(1 + \bar{x}\bar{y})}\left(2 - \frac{x}{\bar{x}} - \frac{y}{\bar{y}}\right) \notag\\
    & \quad - \frac{t\log(1 + 1/\bar{x}\bar{y})}{\bar{t}^{2}}.
\end{align}
Setting $t = 1/z$ and $\bar{t} = 1/\bar{z}$ helps to obtain the inequality in~\eqref{Ineq:logxyz}. The proof ends.
\end{proof}
We now leverage the inequality in~\eqref{Ineq:logxyz} to innerly approximate \eqref{OptPrb_noneta:datasize}, which can be equivalently represented as
\begin{equation} \label{OptPrb_noneta_rw:datasize}
    b_{n} \log\left( 1 + \frac{h_{n}p_{n}}{n_{0}b_{n}} \right) \geq \frac{\tilde{s}}{\tau^{c}}.  
\end{equation}
The left-hand side of \eqref{OptPrb_noneta_rw:datasize} is convex, while the right-hand side is still non-convex. 
By applying \eqref{Ineq:logxyz} for the left-hand side with $z = b_{n}$, $x = n_{0}/p_{n}h_{n}$, $y = b_{n}$, $\bar{z} = b_{n}^{(\kappa-1)}$, $\bar{x} = n_{0}/p_{n}^{(\kappa-1)}h_{n}$, and $\bar{y} = b_{n}^{(\kappa-1)}$, we have the following approximation
\begin{align} \label{Ineq:Rate}
    & b_{n}\log\left(1 + \frac{p_{n}h_{n}}{b_{n}n_{0}}\right) \notag\\
    & \geq \lambda_{k} + \mu_{k}\left( 2 - \frac{p_{n}^{(\kappa-1)}}{p_{n}} - \frac{b_{n}}{b_{n}^{(\kappa-1)}} \right) - \frac{\upsilon_{k}}{b_{n}} \notag\\
    & \triangleq R_{n}^{(\kappa)}(b_{n},p_{n}),
\end{align}
where the coefficients are given as follows: 
\begin{align}
    & \lambda_{k} = 2b_{n}^{(\kappa-1)} \log\left(1 + \frac{p_{n}^{(\kappa-1)}h_{n}}{b_{n}^{(\kappa-1)}n_{0}}\right), \\
    & \mu_{k} = b_{n}^{(\kappa-1)}\left(1 + \frac{b_{n}^{(\kappa-1)}n_{0}}{p_{n}^{(\kappa-1)}h_{n}}\right)^{-1}, \\
    & \upsilon_{k} = (b_{n}^{(\kappa-1)})^{2}\log\left(1 + \frac{p_{n}^{(\kappa-1)}h_{n}}{b_{n}^{(\kappa-1)}n_{0}}\right).
\end{align}
It is observed that $R_{n}^{(\kappa)}(b_{n},p_{n})$ is a concave function, and thus the non-convex constraint in~\eqref{OptPrb_noneta:datasize} is innerly approximated as
\begin{equation}
    R_{n}^{(\kappa)}(b_{n},p_{n}) \geq \frac{\tilde{s}}{\tau^{c}}, \forall n \in \mathcal{N},
\end{equation}
which is a convex constraint. 

In brief, we solve the following optimization problem at step $\kappa$ to obtain $(\boldsymbol{P}, \boldsymbol{\tau}, \boldsymbol{f}, \boldsymbol{b}, \boldsymbol{p})$ for a given model accuracy.
\begin{subequations}
    \label{OptPrb_noneta_cvx}
	\begin{align} 
        & \underset{\boldsymbol{P}, \boldsymbol{p}, \boldsymbol{f}, \boldsymbol{b}, \boldsymbol{\tau}}{\minimize}
        & & \eta_{\text{glo}}^{(\kappa)} \left( \tau^{h} + \tau^{s} + \tau^{l} + \tau^{c} \right), \label{OptPrb_noneta_cvx:Obj}\\
	& \text{subject to} 
        & & q_{n}^{r}r_{n}\tau^{s} + q_{n}^{s}r_{n}\tau^{s} + q^{(\kappa)}(\tau^{s}, f_{n}) + w^{(\kappa)}(p_{n}, \tau^{c}) \notag\\
        &&& \qquad \leq v^{(\kappa)}(\tau^{h}, P_{n}), \forall n \in \mathcal{N}, \label{OptPrb_noneta_cvx:energy}\\   
        &&& \tau^{l} \geq g^{(\kappa)}(\tau^{s}, f_{n}), \forall n \in \mathcal{N}, \label{OptPrb_noneta_cvx:localtime}\\
        &&& R_{n}^{(\kappa)}(b_{n},p_{n}) \geq \frac{\tilde{s}}{\tau^{c}}, \forall n \in \mathcal{N}, \label{OptPrb_noneta_cvx:datasize}\\
        &&& \eqref{OptPrb:maxenergy}, \eqref{OptPrb:sensingdatasize}, \eqref{OptPrb:power}, \eqref{OptPrb:power_BS}, \eqref{OptPrb:bAllocation}, \eqref{OptPrb:CPUfre}. \notag
	\end{align}
\end{subequations}
This optimization problem is convex; therefore, we can employ the CVX package to solve the problem efficiently \cite{Grant2013CVX}. Moreover, solving the convex problem in \eqref{OptPrb_noneta_cvx} helps to generate the feasible solution $(\boldsymbol{P}^{(\kappa)}, \boldsymbol{\tau}^{(\kappa)}, \boldsymbol{f}^{(\kappa)}, \boldsymbol{b}^{(\kappa)}, \boldsymbol{p}^{(\kappa)})$, which is then used to build the approximations and find the feasible solution in the next iteration $(\kappa+1)$.

\begin{figure}[t]
	\centering
	\includegraphics[width=0.910\linewidth]{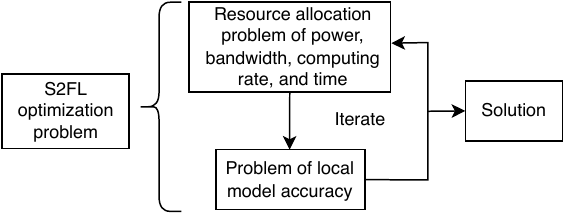}
	\caption{Solution approach.}
	\label{Fig:Solution_approach}
\end{figure}

\subsection{Proposed Iterative Algorithm}
The pseudocode of the proposed algorithm is presented in Algorithm~\ref{Alg:FDMA}, and the flow chart of the solution approach is shown in Fig.~\ref{Fig:Solution_approach}.
In particular, the algorithm starts by setting the iteration index as zero and randomly choosing a feasible solution from the constraints. Then, the iteration index is increased by $1$ at the beginning of each iteration, and the feasible solution from the previous iteration is used to generate inner convex approximations of the non-convex constraints in the current iteration. Moreover, the two steps, updating the local model accuracy and solving the convex problem in \eqref{OptPrb_noneta_cvx}, are performed alternatively in an iterative manner.  

\begin{algorithm}[t]
	\caption{Proposed Algorithm for S2FL}\label{Alg:FDMA}
	\begin{algorithmic}[1]
		\State \textbf{Initialization}: Set the iteration index $\kappa = 0$ and randomly choose the initial feasible solution.
		
		\Repeat
		
    		\State Set $\kappa \leftarrow \kappa + 1$.
    		
    		\State Update the local model accuracy by solving \eqref{OptPrb_eta_cvx}.
    		
    		\State Obtain $(\boldsymbol{\tau}^{(\kappa)}, \boldsymbol{f}^{(\kappa)}, \boldsymbol{b}^{(\kappa)}, \boldsymbol{p}^{(\kappa)})$ by solving \eqref{OptPrb_noneta_cvx}. 
		
		\Until{$|O^{(\kappa)} - O^{(\kappa-1)}|/O^{(\kappa-1)} \leq \varepsilon$.}
		
		\State \textbf{Output}: the optimal solution of $(\boldsymbol{\tau}^{*}, \boldsymbol{f}^{*}, \boldsymbol{b}^{*}, \boldsymbol{p}^{*}, \eta^{*})$ and the objective value of \eqref{OptPrb_rw}.
	\end{algorithmic}
\end{algorithm}

For convergence analysis, the objective value at step $\kappa$ is given as 
\begin{align} \label{Eq:obj_value}
    O^{(\kappa)} = \frac{a}{1 - \eta^{(\kappa)}} \left( \tau^{h,(\kappa)} + \tau^{s,(\kappa)} + \tau^{l,(\kappa)} + \tau^{c,(\kappa)} \right).
\end{align}
The algorithm finally stops when the stopping criterion satisfies, i.e., the relative difference between the objective values in two consecutive iterations is less than a threshold $\varepsilon$, i.e., 
\begin{equation}
    \frac{|O^{(\kappa)} - O^{(\kappa-1)}|}{O^{(\kappa-1)}} \leq \varepsilon.
\end{equation}

The objective value of the total completion time, as defined above in \eqref{Eq:obj_value}, is bounded and reduced after each optimization iteration, i.e., the feasible solution at step $\kappa$ is better than one at step $(\kappa-1)$. Therefore, Algorithm~\ref{Alg:FDMA} finally converges to a locally optimal solution after a number of iterations. For complexity analysis, denote by $T$ the number of iterations required by the algorithm to converge, which will be shown to be relatively small via the numerical results in Section~\ref{Sec:Evaluation}. The complexity of Algorithm~\ref{Alg:FDMA} comprises two aspects: one from the update of the local model accuracy and the other from the CVX solving of the resource allocation problem. The convex problem in \eqref{OptPrb_noneta_cvx} involves $(4N+4)$ optimization variables and $6N+3$ constraints; therefore, the computational complexity of solving it is $\mathcal{O}((4N+4)^{2}(6N+3)^{2.5} + (6N+3)^{3.5})$ \cite{sheng2018power}, where $\mathcal{O}(\cdot)$ denotes the big-O notation. Moreover, the local model accuracy is updated by solving the convex problem in \eqref{OptPrb_eta_cvx}. That problem has two optimization variables (i.e., $\eta$ and $\tau^{l}$) and $(2N+2)$ constraints, thus the computational complexity of solving \eqref{OptPrb_eta_cvx} is $\mathcal{O}(4(2N+2)^{2.5} + (2N+2)^{3.5})$. Accordingly, the computational complexity of Algorithm~\ref{Alg:FDMA} is $\mathcal{O}(T((4N+4)^{2}(6N+3)^{2.5} + (6N+3)^{3.5} + 4(2N+2)^{2.5} + (2N+2)^{3.5}))$. 

\section{Extension to NOMA-enabled Transmission}
\label{Sec:Extension_NOMA}
NOMA has been considered a key technology in future wireless networks. The notion behind the NOMA technique is to allow multiple users to share the same resources; therefore, NOMA offers various advantages (e.g., higher spectral efficiency, better cell-edge throughput, and lower transmission latency) over conventional OMA schemes like FDMA. In this section, to answer the question ``can NOMA help to reduce the total completion time in the S2FL system," we consider employing NOMA for the transmission of local model updates from MDs to the edge. 

In the NOMA case, MDs can transmit the local model updates simultaneously, and the edge server decodes the received signal based on the successive interference cancellation (SIC) order. More specifically, in the uplink power-domain NOMA, signals from MDs are transmitted to the receiver (i.e., access point), where the SIC step is performed. Without loss of generality, the decoding order is the decreasing order of the channel gain qualities, as assumed in many prior studies \cite{pham2019coalitional, liang2022data}. However, we add a note that other decoding order strategies are still applicable, which is out of the research scope of this work. The channel gains are ordered as $h_{1} \geq h_{2} \geq \ldots \geq h_{N}$. Under these design principles, MD $n$ treats the signals of MDs with an index from $n+1$ to $N$ as interference. The transmission rate of MD $n$, when the NOMA technique is applied, is given as follows:
\begin{equation} \label{Eq:rate_NOMA}
    R_{n}^{\text{NOMA}} = B \log_{2}\left( 1 + \frac{h_{n}p_{n}}{n_{0}B + \sum\nolimits_{k = n+1}^{N}p_{k}h_{k}} \right).
\end{equation}
From \eqref{Eq:rate_NOMA}, all MDs are allowed to use the entire network bandwidth, while each MD $n$ in the FDMA case is allocated a fraction only, see \eqref{Eq:rate_FDMA}. As such, the bandwidth allocation is not optimized in the NOMA case.

Considering the same objective function of the total completion time, the problem of our investigated S2FL system in the NOMA-FL case is formulated as follows:
\begin{subequations}
    \label{OptPrb_NOMA}
	\begin{align} 
        & \underset{\boldsymbol{P}, \boldsymbol{p}, \boldsymbol{f}, \boldsymbol{\tau}, \eta}{\minimize}
        & & \frac{a}{1 - \eta} \left( \tau^{h} + \tau^{s} + \tau^{l} + \tau^{c} \right), \label{OptPrb_NOMA:Obj}\\
	& \text{subject to} 
        & & \eqref{OptPrb:energy}, \eqref{OptPrb:localtime}, \eqref{OptPrb:maxenergy}, \eqref{OptPrb:sensingdatasize}, \eqref{OptPrb:power}, \eqref{OptPrb:power_BS}, \eqref{OptPrb:CPUfre}, \eqref{OptPrb:accuracy} \notag\\
        &&& \tau^{c}R_{n}^{\text{NOMA}} \geq s, \forall n \in \mathcal{N}. \label{OptPrb_NOMA:datasize}
	\end{align}
\end{subequations}
It should be noted that since the bandwidth allocation vector is not optimized in this case, the bandwidth allocation notation is removed from the objective function, and the constraint \eqref{OptPrb:bAllocation} is not considered.
The problem in \eqref{OptPrb_NOMA} is highly non-convex due to the non-convex objective and non-convex constraints; therefore, solving it is computationally challenging. In the following, we develop an efficient path-following algorithm to obtain the optimal solution in the NOMA case.

Similar to the FDMA case, we leverage the decomposition technique to decompose \eqref{OptPrb_NOMA} into two smaller problems of the local model accuracy and resource allocation. The model training problem of the local model accuracy for given $(\boldsymbol{P}, \boldsymbol{p}, \boldsymbol{f}, \boldsymbol{\tau})$ is the same as one in the FDMA case. Therefore, one can solve the convex problem in \eqref{OptPrb_eta_cvx} to update the local model accuracy for a given feasible solution $(\boldsymbol{P}, \boldsymbol{p}, \boldsymbol{f}, \boldsymbol{\tau})$ of the second problem. We now consider the following optimization problem to optimize $(\boldsymbol{P}, \boldsymbol{p}, \boldsymbol{f}, \boldsymbol{\tau})$:
\begin{subequations}
    \label{OptPrb_NOMA_pftau}
	\begin{align} 
        & \underset{\boldsymbol{P}, \boldsymbol{p}, \boldsymbol{f}, \boldsymbol{\tau}}{\minimize}
        & & \eta_{\text{glo}}^{(\kappa)} \left( \tau^{h} + \tau^{s} + \tau^{l} + \tau^{c} \right), \label{OptPrb_NOMA_pftau:Obj}\\
        & \text{subject to} 
        & & \eqref{OptPrb:energy}, \eqref{OptPrb:localtime}, \eqref{OptPrb:maxenergy}, \eqref{OptPrb:sensingdatasize}, \eqref{OptPrb:power}, \eqref{OptPrb:power_BS}, \eqref{OptPrb:CPUfre}, \notag\\
        &&& \tau^{c}\log\left( 1 + \frac{h_{n}p_{n}}{n_{0}B + \sum\nolimits_{k = n+1}^{N}p_{k}h_{k}} \right) \geq \hat{s}, \notag\\
        &&& \qquad \forall n \in \mathcal{N}, \label{OptPrb_NOMA_pftau:datasize}
	\end{align}
\end{subequations}
where $\hat{s} = s B^{-1} \log2$.
Note that the feasible solution set composed of \eqref{OptPrb:maxenergy}, \eqref{OptPrb:power}, \eqref{OptPrb:power_BS}, and \eqref{OptPrb:CPUfre} is convex, while we leverage the same approaches in the FDMA case to generate innerly convex approximations of the non-convex constraints in \eqref{OptPrb:energy} and \eqref{OptPrb:localtime}. In the following, we process the non-convex constraint in~\eqref{OptPrb_NOMA_pftau:datasize} through an inner convex approximation in order to generate a convex program of the non-convex problem in~\eqref{OptPrb_NOMA_pftau}.

By applying the inequality in~\eqref{Ineq:logxyz} for the left-hand side of \eqref{OptPrb_NOMA_pftau:datasize} with $z = \tau^{c}$, $x = 1/p_{n}h_{n}$, $y = n_{0}B + \sum\nolimits_{k = n+1}^{N}p_{k}h_{k}$, $\bar{z} = \tau^{c,{(\kappa-1)}}$, $\bar{x} = 1/p_{n}^{(\kappa-1)}h_{n}$, and $\bar{y} = n_{0}B + \sum\nolimits_{k = n+1}^{N}p_{k}^{(\kappa-1)}h_{k}$, we can innerly approximate the left-hand side of \eqref{OptPrb_NOMA_pftau:datasize} as follows: 
\begin{align} \label{Ineq:Rate_NOMA}
& \tau^{c}\log\left( 1 + \frac{h_{n}p_{n}}{n_{0}B + \sum\nolimits_{k = n+1}^{N}p_{k}h_{k}} \right) \geq  \notag\\ 
& \; \lambda_{k} + \mu_{k}\left( 2 - \frac{p_{n}^{(\kappa-1)}}{p_{n}} - \frac{n_{0}B + \sum\nolimits_{k = n+1}^{N}p_{k}h_{k}}{n_{0}B + \sum\nolimits_{k = n+1}^{N}p_{k}^{(\kappa-1)}h_{k}} \right) - \frac{\upsilon_{k}}{\tau^{c}} \notag\\
& \; \triangleq S_{n}^{(\kappa)}(p_{n},\tau^{c}),
\end{align}
where the coefficients are given as follows: 
\begin{align}
    & \lambda_{k} = \tau^{c,{(\kappa-1)}}\log\left( 1 + \frac{h_{n}p_{n}^{(\kappa-1)}}{n_{0}B + \sum\nolimits_{k = n+1}^{N}p_{k}^{(\kappa-1)}h_{k}} \right), \\
    & \mu_{k} = \tau^{c,{(\kappa-1)}}\left( 1 + \frac{n_{0}B + \sum\nolimits_{k = n+1}^{N}p_{k}^{(\kappa-1)}h_{k}}{h_{n}p_{n}^{(\kappa-1)}} \right)^{-1}, \\
    & \upsilon_{k} = (\tau^{c,{(\kappa-1)}})^{2}\log\left( 1 + \frac{h_{n}p_{n}^{(\kappa-1)}}{n_{0}B + \sum\nolimits_{k = n+1}^{N}p_{k}^{(\kappa-1)}h_{k}} \right).
\end{align}
Accordingly, the left-hand side of \eqref{OptPrb_NOMA_pftau:datasize} can be convexified as $S_{n}^{(\kappa)}(p_{n},\tau^{c})$ and the constraint is innerly convex approximated as follows:
\begin{equation}
    S_{n}^{(\kappa)}(p_{n},\tau^{c}) \geq \hat{s}, \forall n \in \mathcal{N}.
\end{equation}

In summary, we solve the following convex problem at step $\kappa$ to obtain the feasible solution $(\boldsymbol{P}^{(\kappa)}, \boldsymbol{p}^{(\kappa)}, \boldsymbol{f}^{(\kappa)}, \boldsymbol{\tau}^{(\kappa)})$:
\begin{subequations}
    \label{OptPrb_NOMA_pftau_cvx}
	\begin{align} 
        & \underset{\boldsymbol{P}, \boldsymbol{p}, \boldsymbol{f}, \boldsymbol{\tau}}{\minimize}
        & & U^{(\kappa)}(\boldsymbol{P}, \boldsymbol{\tau}), \label{OptPrb_NOMA_pftau_cvx:Obj}\\
	& \text{subject to} 
        & & S_{n}^{(\kappa)}(p_{n},\tau^{c}) \geq \hat{s}, \forall n \in \mathcal{N}, \label{OptPrb_NOMA_pftau_cvx:datasize}\\ 
        &&& \eqref{OptPrb_noneta_cvx:energy}, \eqref{OptPrb_noneta_cvx:localtime}, \eqref{OptPrb:maxenergy}, \eqref{OptPrb:sensingdatasize}, \eqref{OptPrb:power}, \eqref{OptPrb:power_BS}, \eqref{OptPrb:CPUfre}. \notag
	\end{align}
\end{subequations}
This problem is convex; therefore, it is readily solvable by the CVX package \cite{Grant2013CVX}.

\begin{algorithm}[t]
	\caption{Proposed Algorithm for NOMA-S2FL}\label{Alg:NOMA}
	\begin{algorithmic}[1]
		\State \textbf{Initialization}: Set the iteration index $\kappa = 0$ and randomly choose the initial feasible solution.
		
		\Repeat
		\State Set $\kappa \leftarrow \kappa + 1$.
		
		\State Solve \eqref{OptPrb_eta_cvx} to update the local model accuracy $\eta^{(\kappa)}$.
		
		\State Solve \eqref{OptPrb_NOMA_pftau_cvx} to obtain $(\boldsymbol{P}^{(\kappa)}, \boldsymbol{p}^{(\kappa)}, \boldsymbol{f}^{(\kappa)}, \boldsymbol{\tau}^{(\kappa)})$. 
		
		\Until{$|U^{(\kappa)} - U^{(\kappa-1)}|/U^{(\kappa-1)} \leq \varepsilon$.}
		
		\State \textbf{Output}: the optimal solution  $(\boldsymbol{P}^{*}, \boldsymbol{p}^{*}, \boldsymbol{f}^{*}, \boldsymbol{\tau}^{*}, \eta^{*})$, and the optimal cost value $O^{*}(\boldsymbol{P}^{*}, \boldsymbol{p}^{*}, \boldsymbol{f}^{*}, \boldsymbol{\tau}^{*})$.
	\end{algorithmic}
\end{algorithm}

The pseudo cost of the proposed procedure for solving the S2FL problem in the NOMA case is presented in Algorithm~\ref{Alg:NOMA}. Similar to Algorithm~\ref{Alg:FDMA}, Algorithm~\ref{Alg:NOMA} updates the two variable blocks in the fashion of the alternating optimization approach until the stopping criterion is satisfied. Also similar to the convergence analysis for Algorithm~\ref{Alg:FDMA}, the algorithm in the NOMA case is finally converged to a locally optimal solution. For computational analysis, the computational complexity of Algorithm~\ref{Alg:NOMA} is $\mathcal{O}(T((3N+4)^{2}(6N+2)^{2.5} + (6N+2)^{3.5} + 4(2N+2)^{2.5} + (2N+2)^{3.5}))$. It is since the convex problem in \eqref{OptPrb_NOMA_pftau_cvx} involves $2N + 4$ variables and $6N+2$ constraints, while the computational complexity of solving the model training problem is $\mathcal{O}(4(2N+2)^{2.5} + (2N+2)^{3.5})$.

\section{Performance Evaluation}
\label{Sec:Evaluation}
Simulation results are presented in this section to evaluate the proposed algorithm's performance. We first provide the simulation settings and then discuss simulation results and comparisons with the benchmark schemes. 

\subsection{Simulation Settings}
We simulate an S2FL network with $10$ single-antenna MDs for performance evaluation. The edge server collocated with an access point is equipped with $40$ antennas, which are divided equally among 10 MDs, i.e., $N_{a} = 4$ antennas are used to create energy beamforming for each MD. The average fading power between MD $n$ and edge server is modeled as $ \bar{g}_{n} = 5\times 10^{-4} \times (\varrho_{n})^{-2}$, where $\varrho_{n}$ is the distance between the edge server and MD $n$ and $\varrho_{n}$ is randomly distributed in the range $[1, 5]$ m \cite{li2019wirelessly}. The channel vector for MD $n$ is modeled by the Rician fading model as follows: 
\begin{equation}
    \boldsymbol{g}_{n} = \sqrt{\frac{K\bar{g}_{n}}{K+1}}\boldsymbol{g}_{n}^{\text{LoS}} + \sqrt{\frac{\bar{g}_{n}}{K+1}}\boldsymbol{g}_{n}^{\text{NLoS}},
\end{equation}
where $K$ denotes the Rician factor, and $\boldsymbol{g}_{n}^{\text{LoS}}$ and $\boldsymbol{g}_{n}^{\text{NLoS}}$ are $4 \times 1$ vectors representing the line-of-sight (LoS) component and non-LoS (NLoS) component, respectively. Here, $\boldsymbol{g}_{n}^{\text{LoS}}$ is a deterministic component with $|\boldsymbol{g}_{n}^{\text{LoS}}| = 1$, and $\boldsymbol{g}_{n}^{\text{NLoS}}$ follows a normal distribution with zero mean and unit variance. Then, the channel gain $h_{n}$ from MD $n$ to the edge server is computed as $h_{n} = |\boldsymbol{g}_{n}|^{2}$.
We set the transmit power budget of the energy source (i.e., edge server) to be $P_{0} = 42$ dBm, the maximum transmit power of MDs to be $p_{n}^{\max} = 10$ dBm, the system bandwidth to be $B = 500$ kHz, the noise power density to be $n_{0} = 10^{-14}$ W/Hz, and the maximum and minimum CPU frequency to be $f_{n}^{\max} = 2.0$ GHz and $f_{n}^{\min} = 0.1$ GHz, respectively \cite{pham2022energy}. We set the energy conversion efficiency coefficient $\varphi = 0.9$ and the reward coefficient $q_{n}^{r} \in [1, 10]\times 10^{-12}$ J/bit for all MDs. For the local computing model, the computation workload is randomly chosen from the range $[10, 20]$ cycles per bit, the coefficient of the CPU is $\xi = 10^{-28}$, and the model size $s = 28.1$ Kb is identical for all MDs. For the data sensing model, the sensing rate $r_{n}$ is picked at random from the range $[1, 5]\times 10^{6}$ bit/s, the energy consumption per bit $q_{n}^{s}$ is uniformly distributed in the range $[1, 10]\times 10^{-12}$ J/bit \cite{li2019wirelessly}, and the required sensing data size is $D_{0} = 100$ Kb. 

\subsection{Performance Results}
\begin{figure}[t]
	\centering
	\includegraphics[width=0.885\linewidth]{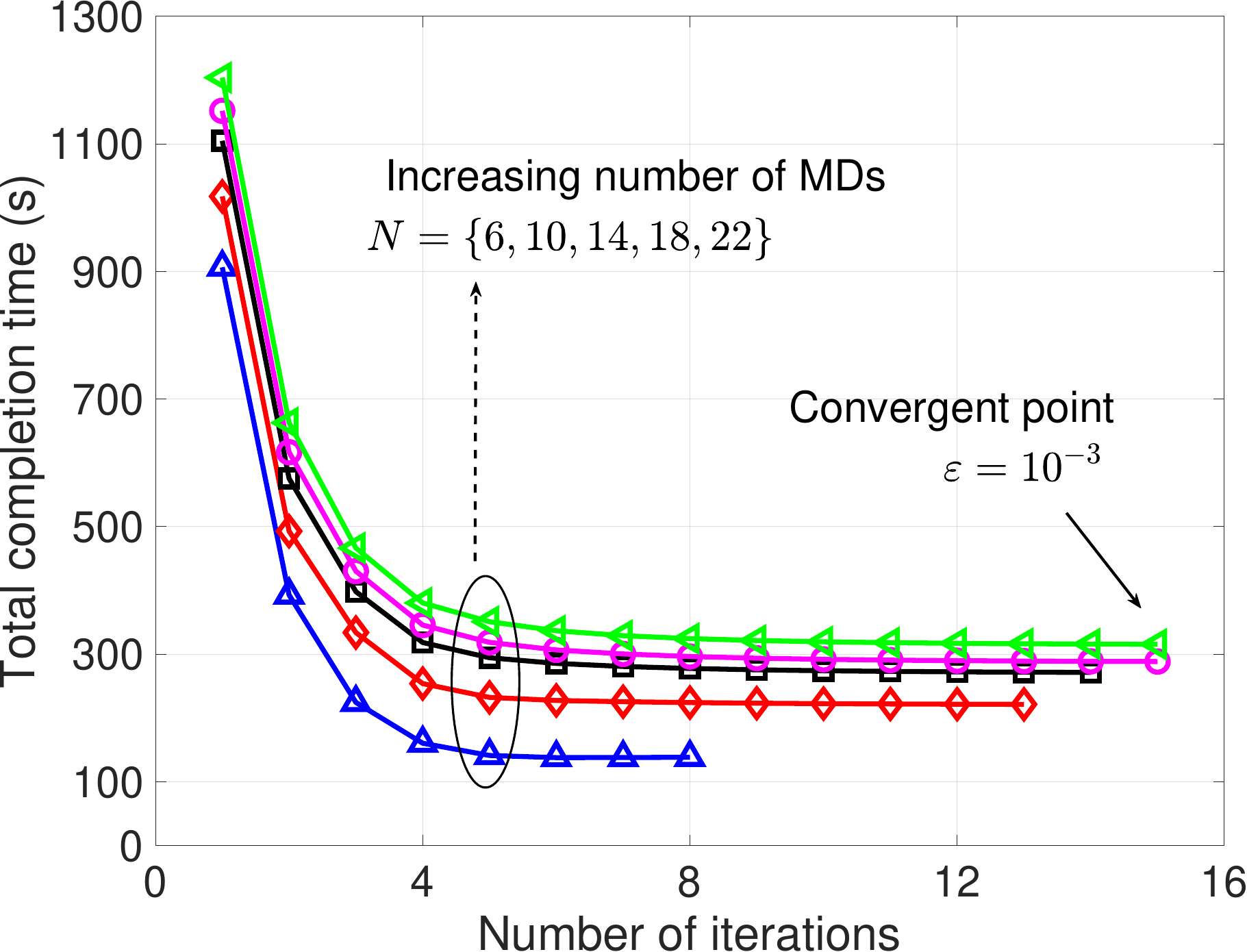}
	\caption{Convergence curve of the proposed algorithm.}
	\label{Fig:Convergence_vs_users}
\end{figure}

Fig.~\ref{Fig:Convergence_vs_users} shows the convergence curve of the proposed algorithm for a given network realization with different numbers of MDs, $N = \{6, 10, 14, 18, 22\}$, where the stopping threshold is set to be $\varepsilon = 10^{-3}$, and each energy beam is served by $4$ antennas at the source. It is observed from the figure that the proposed algorithm only requires around $8$-$15$ iterations to be converged, thus showing the computational effectiveness of Alg.~\ref{Alg:FDMA}. The figure also shows that the algorithm with higher MDs typically takes more iterations to converge to the optimal solution. This is because the feasible solution set and the computational complexity become larger with the increment in the number of MDs $N$. Furthermore, the total completion time increases with the number of MDs. The reason is that the higher $N$ is, the higher the possibility that MDs have worse effective channel gains and harvest less energy. 

In order to analyze the performance improvements of our joint power transfer, data sensing, model training, and transmission framework and the proposed algorithm (labeled as S2FL), a set of benchmark schemes is considered. Firstly, in the \textit{fixed training data} (FTD) scheme (corresponding to \cite{pham2022energy}), which jointly optimizes power transfer, model training, and data transmission for given training data. Thus, the sensing time is configured with a preset value subject to the training data constraint in \eqref{Eq:trainingdata}, and there is no need to consider optimizing $\tau^{s}$. Secondly, in the \textit{fixed local model accuracy} (FLA) scheme (corresponding to \cite{li2022joint}), our proposed S2FL algorithm is performed under given data processing at MDs ($\eta = 0.25$). Accordingly, one can pre-calculate the number of local rounds $\eta_{\text{loc}}$ via \eqref{Eq:localrounds} and global communication rounds $\eta_{\text{glo}}$ via \eqref{Eq:globalrounds} in the FLA scheme. Accordingly, the corresponding algorithm in the GLA scheme only comprises the path-following procedure for solving the resource allocation problem. Thirdly, we compare the proposed algorithm with the \textit{proportional power transfer} (PPT) scheme (corresponding to \cite{wu2022non}) under fixed transmit power for WPT. For a fair allocation among MDs, the power budget of the access point is allocated to $N$ beams according to the effective channel gains between MDs and the energy source. Finally, the proposed algorithm is compared with the \textit{equal bandwidth allocation} (EBA) scheme, where the system bandwidth is equally allocated to MDs. Moreover, the solutions for these benchmark schemes are readily obtained after some minor modifications of Algorithm~\ref{Alg:FDMA}. Note that the following results are obtained by taking an average of multiple network realizations, each randomizes the locations of MDs.   

\begin{figure}[t]
	\centering
	\includegraphics[width=0.885\linewidth]{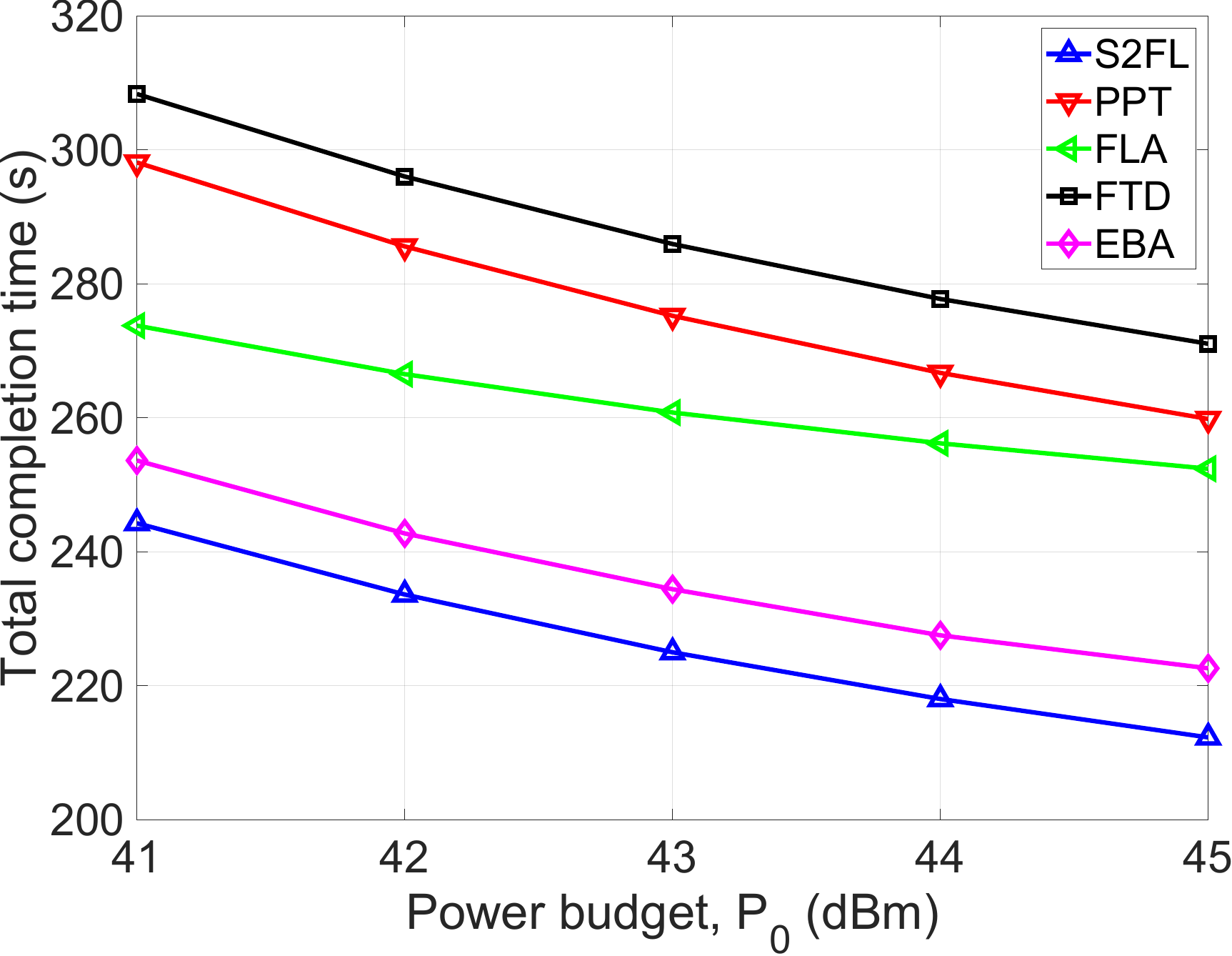}
	\caption{Total completion versus the power budget $P_{0}$ of the energy source.}
	\label{Fig:TCT_vs_powerbudget}
\end{figure}

\begin{figure}[t]
	\centering
	\includegraphics[width=0.885\linewidth]{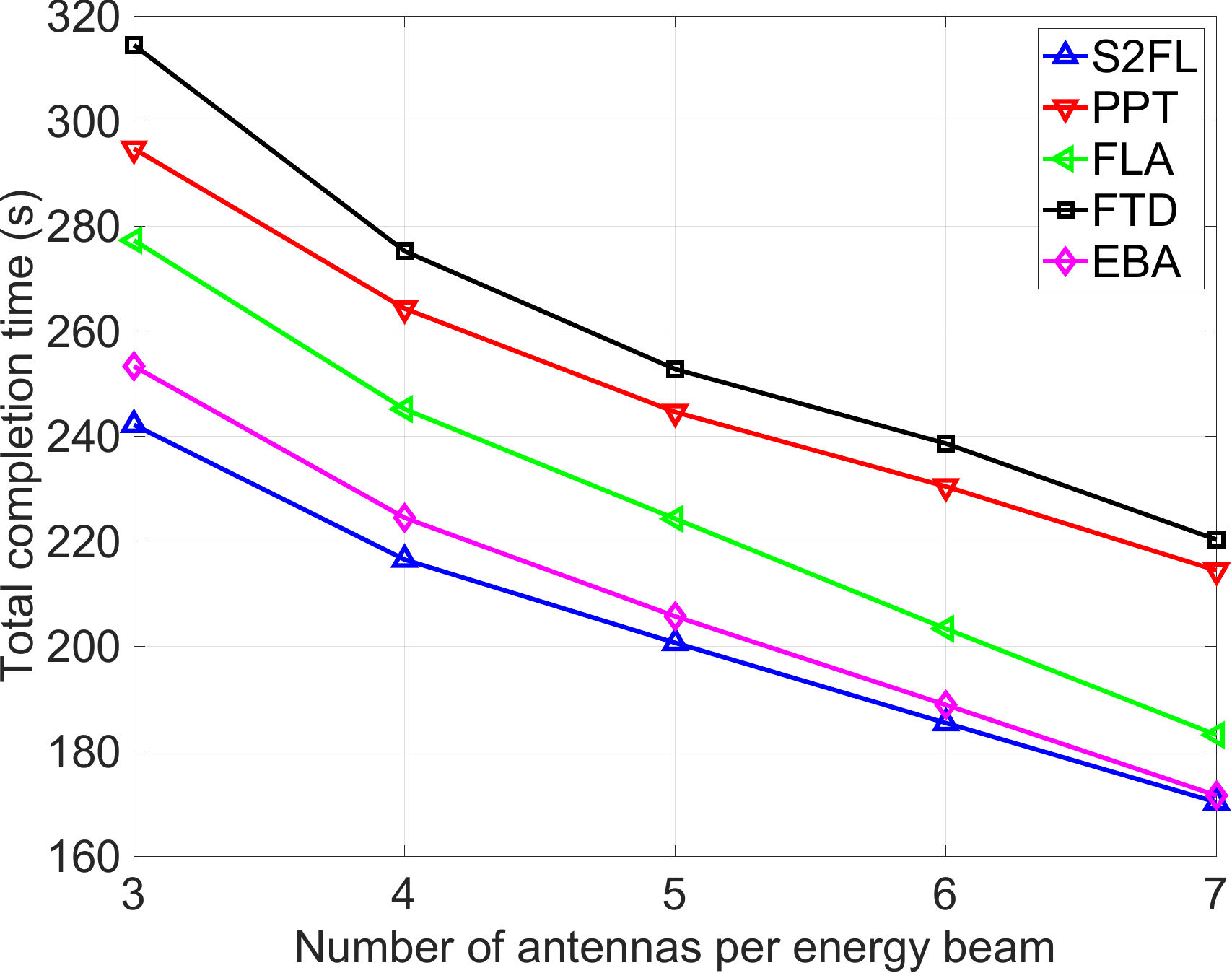}
	\caption{Total completion versus the antenna setting at the energy source.}
	\label{Fig:TCT_vs_antennasperbeam}
\end{figure}

We show the performance of the proposed algorithm with different power budgets of the energy source. It is shown in Fig.~\ref{Fig:TCT_vs_powerbudget} that the total completion time gets smaller when the power budget $P_{0}$ of the energy source becomes larger. It is because increasing $P_{0}$ helps MDs to harvest more energy for learning purposes. Therefore, with larger $P_{0}$, one can reduce the time durations for the learning phases and consequently shorten the total completion time. Notably, the proposed algorithm achieves the lowest completion time over the benchmark schemes, which respectively preset the power transfer, local model accuracy, training data, or bandwidth assignment. Accordingly, Fig.~\ref{Fig:TCT_vs_powerbudget} also confirms the superiority of the S2FL framework and proposed algorithm in terms of total completion time. On average, the proposed algorithm offers a time reduction of around $21.24$\%, $18.21$\%, $13.42$\%, and $4.03$\% compared with the FTD, PPT, FLA, and EBA schemes, respectively.
We also evaluate the total completion time of different schemes when the number of antennas per energy beam $N_{a}$ is varied from $3$ to $7$. Similar to the observation from Fig.~\ref{Fig:TCT_vs_powerbudget}, one can observe from Fig.~\ref{Fig:TCT_vs_antennasperbeam} that the total completion time decreases with the number of antennas at the energy source. This is due to the fact that larger $N_{a}$ helps the energy source better transfer power to MDs for the same power allocation at the energy source, and thus MDs can harvest more energy for learning purposes, which in turn reduces the completion time. 

\begin{figure}[t]
	\centering
	\includegraphics[width=0.885\linewidth]{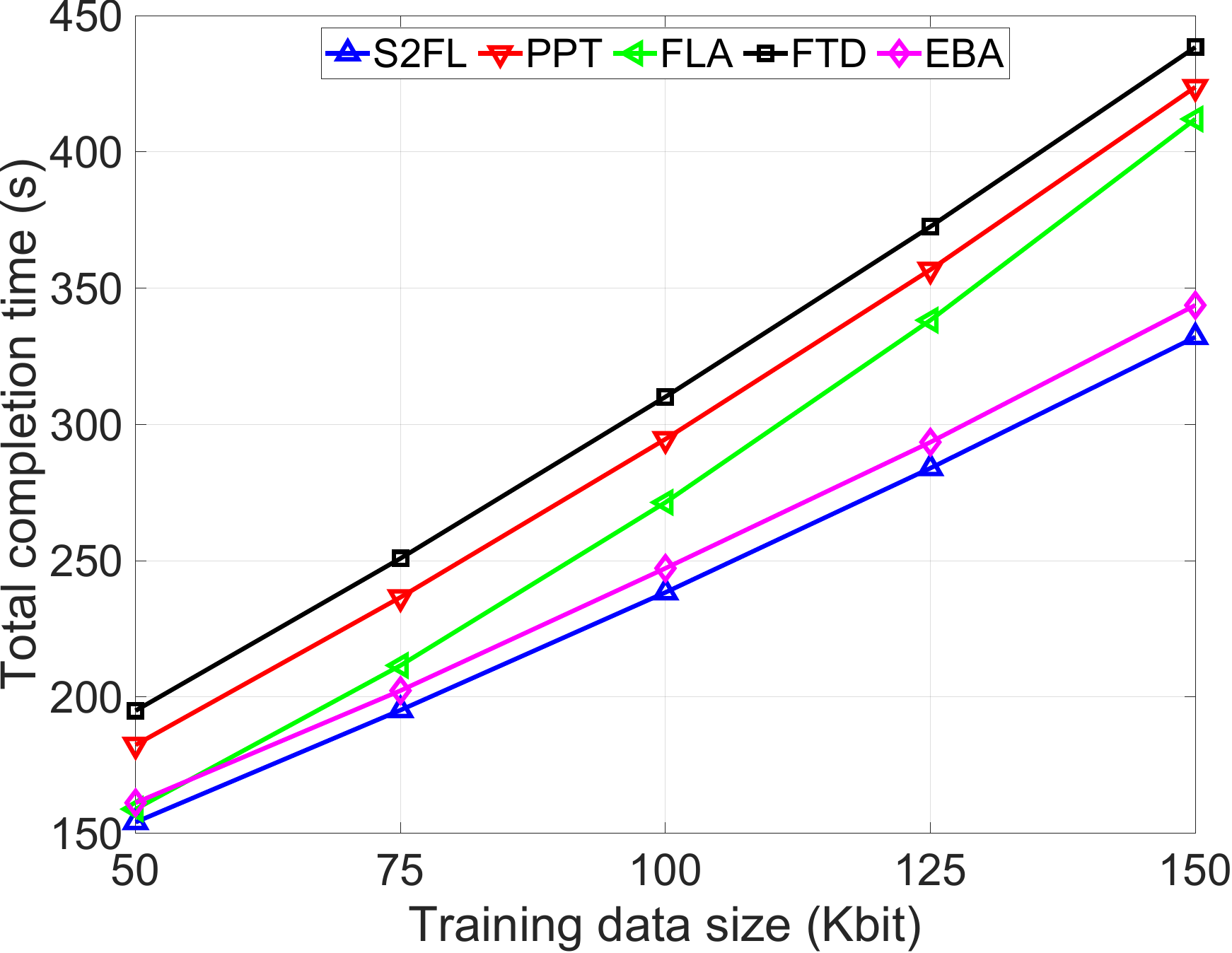}
	\caption{Total completion versus the amount of sensing data $D_{0}$ in each round.}
	\label{Fig:TCT_vs_trainingdatasize}
\end{figure}

We plot the total completion time when varying the amount of data $D_{0}$ needed to be sensed in each learning round. As shown in Fig.~\ref{Fig:TCT_vs_trainingdatasize}, upon increasing $D_{0}$, the total completion time achieved by all the algorithms increases. This can be explained since the larger $D_{0}$ is, the higher the sensing time required by MDs is. Accordingly, for the same harvested energy, MDs need to increase the time duration of the energy harvesting time, or one needs to decrease the local training's accuracy (i.e., $\eta$ is higher and $\eta_{\text{loc}}$ is smaller), as mathematically illustrated in the energy constraint~\eqref{OptPrb_rw:energy}. As such, both the completion time per round and the number of global communication rounds increase, thus leading to a higher total completion time. Again, Fig.~\ref{Fig:TCT_vs_trainingdatasize} shows that the proposed S2FL algorithm is the best performer.

\begin{figure}[t]
	\centering
	\includegraphics[width=0.885\linewidth]{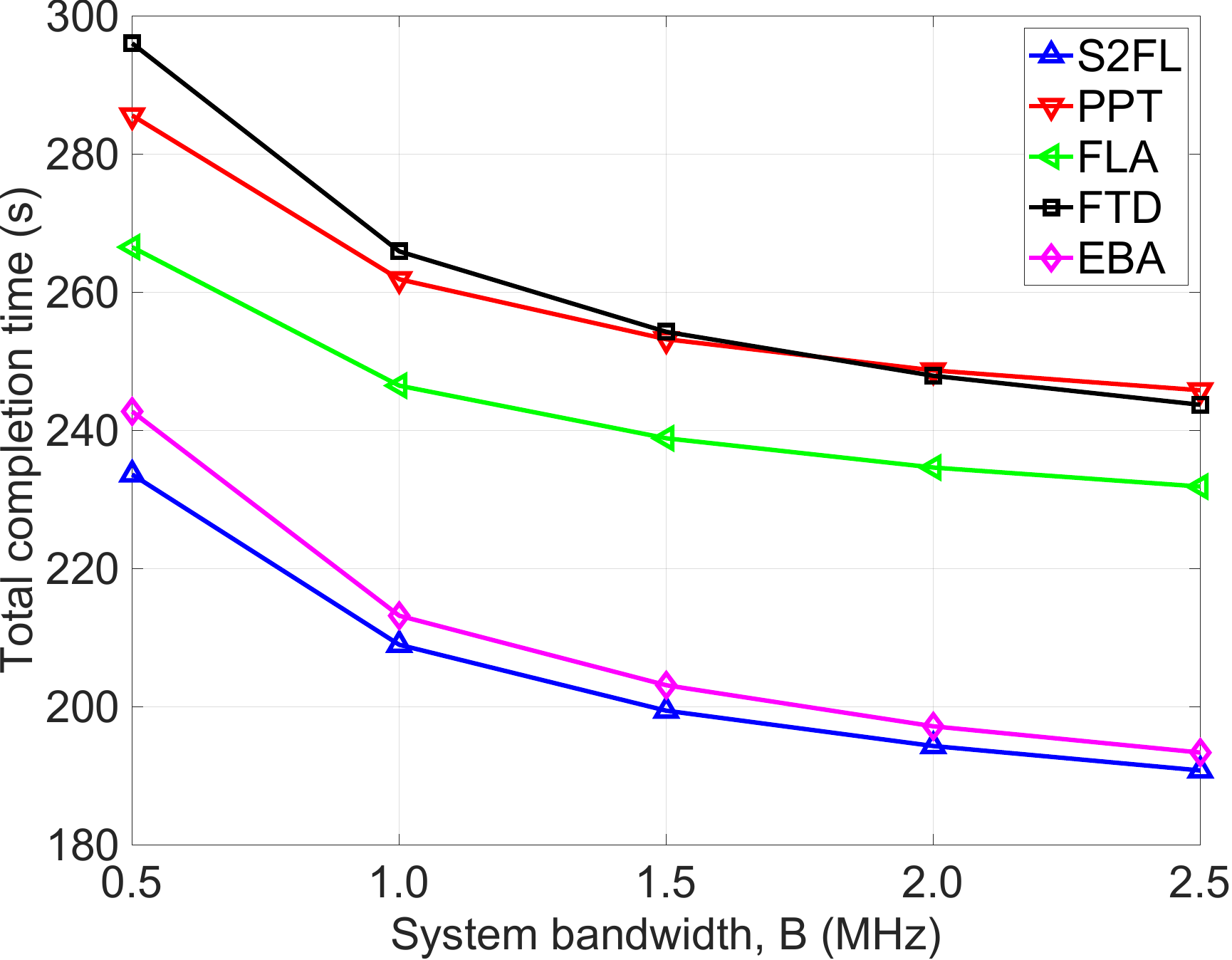}
	\caption{Total completion versus the system bandwidth $B$.}
	\label{Fig:TCT_vs_systembandwidth}
\end{figure}

We show the performance comparison when the system bandwidth varies from $0.5$ MHz to $2.5$ MHz with a deviation step of $0.5$ MHz. It is shown in Fig.~\ref{Fig:TCT_vs_systembandwidth} that the larger the system bandwidth is, the lower the total completion time of the FL system is. The reason is that more available bandwidth resources can increase the transmission rate between the edge server and MDs, thus reducing the transmission time and the total completion time as well. 
In comparison to the FTD, FLA, and PPT benchmarks, the S2FL algorithm achieves a respective average of $21.45$\%, $15.60$\%, and $20.61$\% performance improvement in terms of total completion time. Furthermore, the EBA scheme linearly approaches the performance of our proposed S2FL algorithm. For example, compared to the EBA scheme, S2FL provides a performance gain of around $3.76$\% and $1.34$\% when $B = 0.5$ MHz and $B = 2.5$ MHz, respectively. This is because more bandwidth resources relax the need for bandwidth allocation in S2FL. 

\begin{figure}[t]
	\centering
	\includegraphics[width=0.885\linewidth]{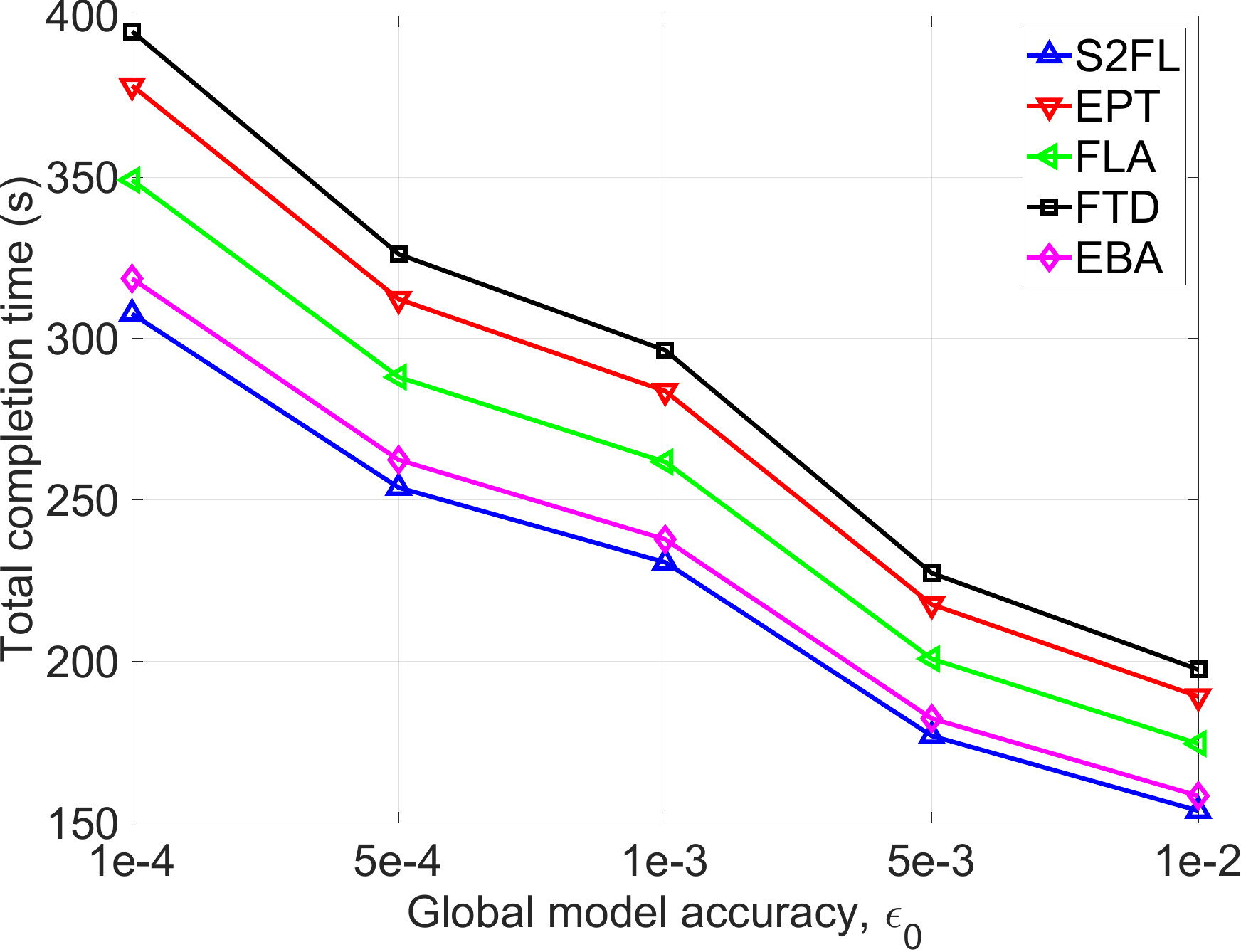}
	\caption{Total completion versus the global model accuracy $\epsilon_{0}$.}
	\label{Fig:TCT_vs_epsilon0}
\end{figure}

We plot the total completion time in Fig.~\ref{Fig:TCT_vs_epsilon0} when varying the accuracy requirement for the global model, $\epsilon_{0}$. One can observe from the figure that the completion time reduces with $\epsilon_{0}$, i.e., the FL system has a higher completion time when the desired accuracy of the global model is higher. This is because the required number of global communication rounds $\eta_{\text{glo}}$ gets larger when $\epsilon_{0}$ is smaller, as mathematically modeled in \eqref{Eq:globalrounds}. One can also observe that the performance gap between the proposed S2FL algorithm and the benchmark schemes is smaller with $\epsilon_{0}$, which coincides precisely with the observation in \cite{pham2022energy}, where the system energy consumption is minimized. However, it would be noted that the work in \cite{pham2022energy} assumes the availability of training data as in the FTD scheme, showing that our proposed framework of joint power transfer, data sensing, model training, and transmission can bring performance improvements over the alternative schemes with only a set of optimized parameters.

\subsection{Performance Comparison in S2FL: OMA vs NOMA}

\begin{figure}[t]
	\centering
	\subfloat[Total completion time vs. the system bandwidth $B$\label{Fig:TCT_vs_B_OMA_NOMA}]{\includegraphics[width=0.885\linewidth]{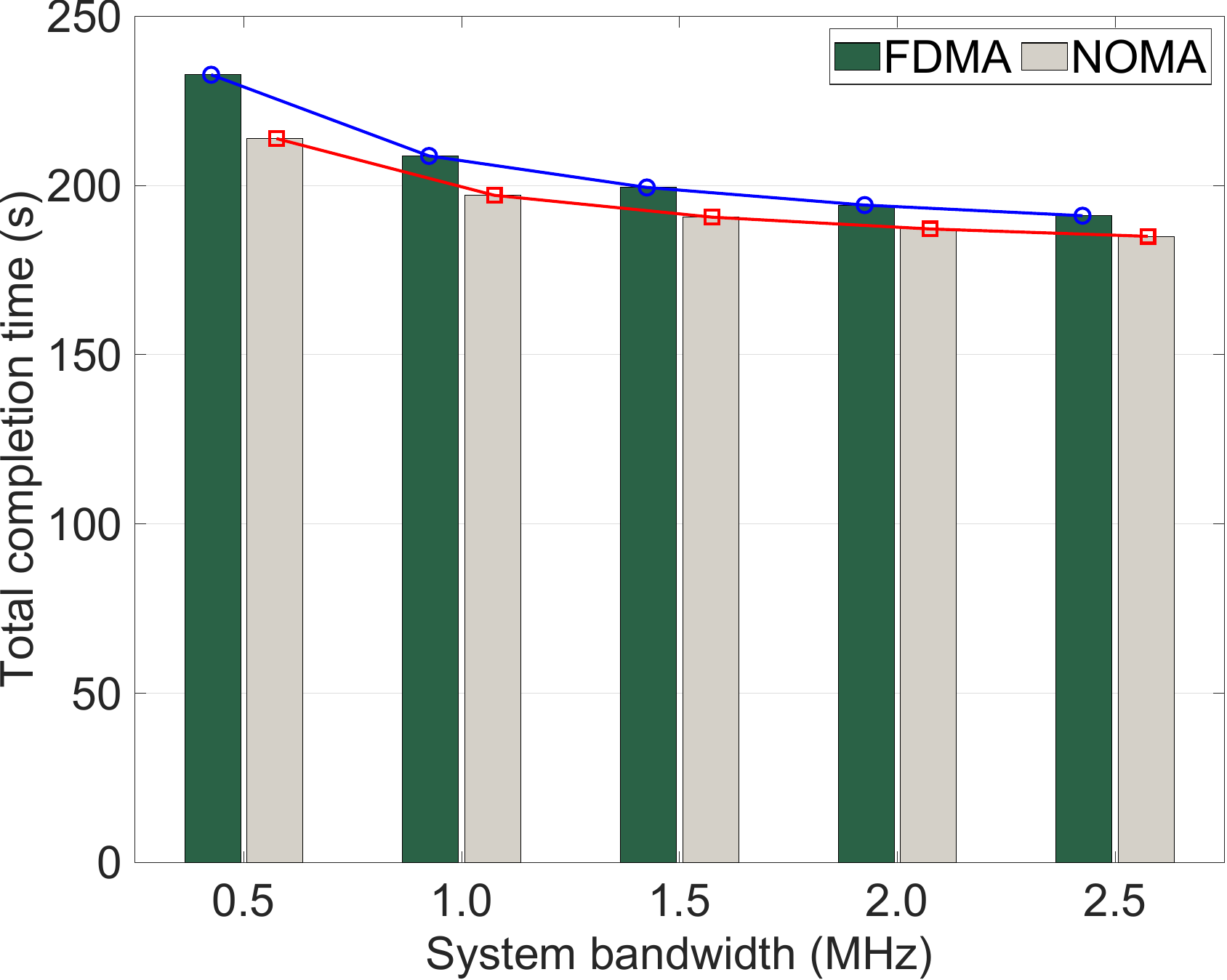}}\qquad
	\subfloat[Total completion time vs. the power budget $P_{0}$\label{Fig:TCT_vs_P0_OMA_NOMA}]{\includegraphics[width=0.885\linewidth]{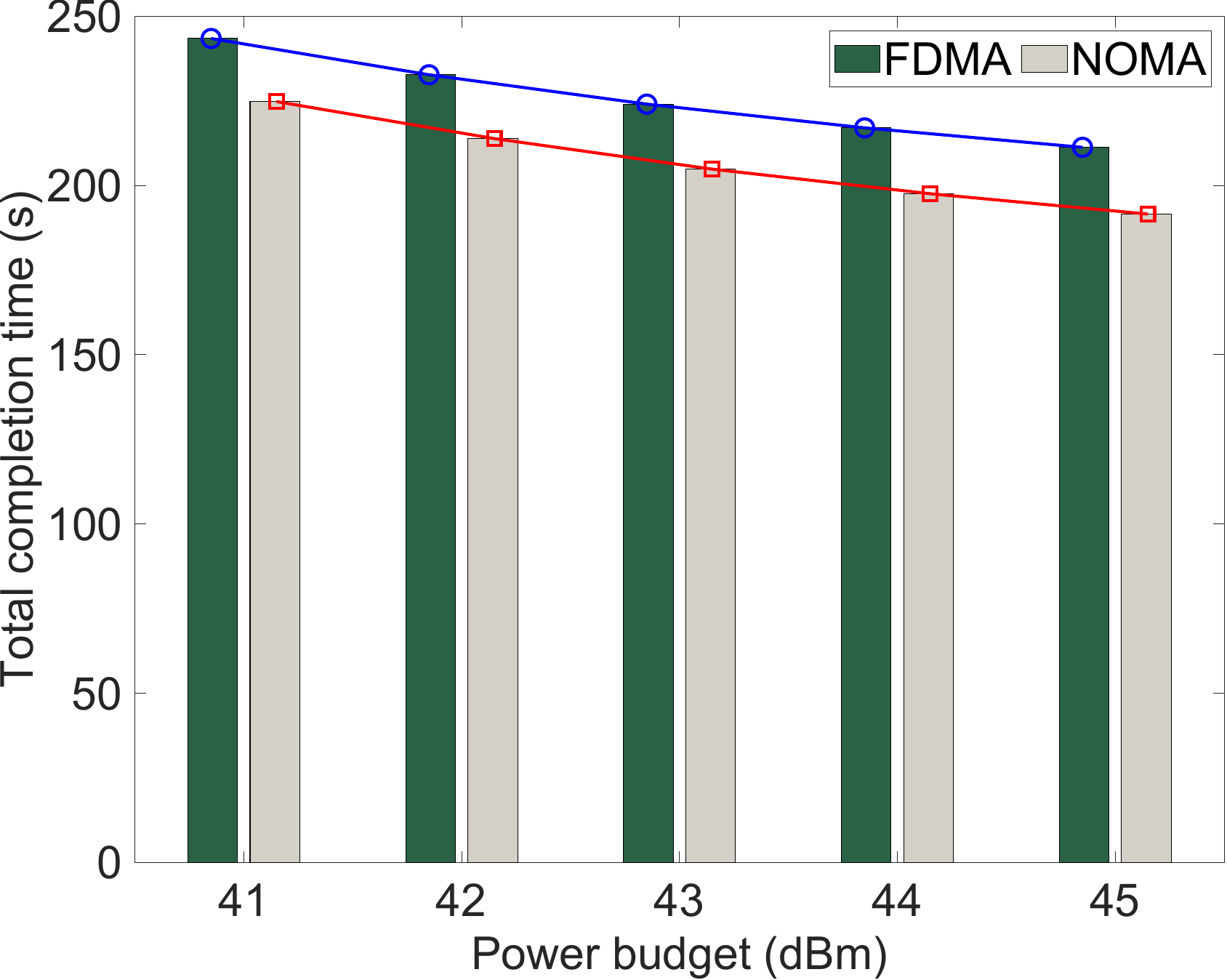}}
	\caption{Performance comparison between OMA and NOMA.}
	\label{Fig:PerformanceComparison}
\end{figure} 

In the previous part, we have shown that the proposed S2FL algorithm is able to bring significant improvements in terms of completion time when it is compared with several benchmark schemes. Now, we compare the S2FL system in the FDMA (i.e., OMA) case with its NOMA counterpart. Fig.~\ref{Fig:PerformanceComparison} clearly indicates that the completion time is considerably reduced in the NOMA case. For example, with the scenarios of $N = 10$ MDs, the NOMA-enabled S2FL can reduce approximately $18.8618$ s and $6.0910$ s when the system bandwidth is $0.5$ MHz and $2.5$ MHz, respectively. Fig.~\ref{Fig:TCT_vs_P0_OMA_NOMA} also indicates that more power budget resources at the edge server help enlarge the performance gain achieved by NOMA and OMA, i.e., $8.36$\% on average. Note that this work aims to develop a synchronous design of the S2FL system while FDMA and NOMA are employed for the transmission phase of all MDs. Therefore, the results in Fig.~\ref{Fig:PerformanceComparison} evidently demonstrate the performance superiority of NOMA over FDMA in data transmission when both are equipped with the optimal resource solutions in the considered synchronous S2FL network. It is promising to consider extension scenarios when MDs may have different time durations of the sensing, training, and transmission phases (i.e., asynchronous designs). As such, FMDA, with the ability to ensure asynchronicity for MDs, may also outweigh the performance of its synchronous counterpart and the NOMA-enabled scheme. We leave this exciting research for future work.

\section{Conclusion}
\label{Sec:Conclusion}
This work has investigated the S2FL framework by leveraging MDs' data sensing and energy harvesting capabilities. Notably, the proposed S2FL framework enables energy-limited MDs to wirelessly harvest energy from the energy source and collect the sensing data from the environment for local training. Following the S2FL framework, a computationally-efficient path-following algorithm has been proposed to solve the computationally challenging non-convex problem in S2FL. We have shown that the proposed algorithm provides a significant reduction in total completion time compared with a set of benchmark schemes under various network settings. We have also shown that adopting NOMA as the multiple access technique helps further reduce the total completion time of the considered S2FL network. 



\end{document}